\newtheorem{lemma}{Lemma}
\newtheorem{theorem}{Theorem}
\newtheorem{corollary}{Corollary}
\DeclareMathOperator*{\argmax}{arg\,max}
\title{Topology Discovery Using Path Interference }
\author{Anurag Rai, Eytan Modiano\\ Laboratory for Information and Decision Systems, MIT, USA \\ \{rai,modiano\}@mit.edu \vspace{-2ex}
}
\begin{document}
\maketitle

\begin{abstract}
We consider the problem of inferring the topology of a network using the measurements available at the end nodes, without cooperation from the internal nodes. To this end, we provide a simple method to obtain path interference which identifies whether two paths in the network intersect with each other. Using this information, we formulate the topology inference problem as an integer program and develop polynomial time algorithms to solve it optimally for networks with tree and ring topologies. Finally, we use the insight developed from these algorithms to develop a heuristic for identifying general topologies. Simulation results show that our heuristic outperforms a recently proposed algorithm that uses distance measurements for topology discovery.
\end{abstract}

\section{Introduction}

Knowing the topology of the underlying network can provide several advantages to the communicating hosts. For example, the topology can be used to improve the throughput and robustness of the network \cite{ron, jones}, and it can be a necessary part of identifying bottlenecks and critical links in the network \cite{survivability}. It can also be used to monitor the network or to simply get a picture of the underlying system. However, often the owners of the network keep the topology information hidden due to privacy and security concerns \cite{sarac}. This has led to a significant amount of research on topology discovery. We develop a new method that can be used to identify general network topologies. This method only requires the interference pattern of the paths in the network which can be inferred from the data available at the end nodes.

Prior work on topology discovery can be divided into two main categories: algorithms that require cooperation from the internal nodes and the algorithms that do not. Many algorithms for topology discovery, usually designed for the purpose of mapping the Internet, use ICMP commands like traceroute \cite{rocketfule, crovella, sarac}. These methods requires some level of cooperation from the network providers. The other methods, that fall under the category of network tomography \cite{vardi,survey}, use data that can be measured directly at the end nodes. Our method falls under this category as we do not seek any information from the internal nodes.

In the network tomography literature significant attention has been given to the discovery of tree networks. Papers such as \cite{nowak2, towsley, yang} use probing mechanisms to infer single source multiple destination trees. There is also some work on combining these single source trees to form a multi-source multi-destination network \cite{nowak3, sattari}. In \cite{singh}, the authors provide a method for identifying minimal trees with multiple sources and multiple destinations by using distance measurements. 

In \cite{kelner}, the authors develop an algorithm called RGD1 that attempts to  discover a general network topology. It uses a set for four nodes that share a link, called quatrets and uses them to build an approximation of the entrie network.
The discovery of the quatret  and placement of the nodes in the topology requires the shortest path distance between the nodes, which is inferred using packet delay. RGD1 algorithm is very close to our algorithm in terms the objective, hence we will compare its performance against ours via simulation.

In order to obtain the interference pattern, we provide a simple method based on linear regression. This method uses the number of in-flight packets in the paths and the delay experienced by the packets to determine whether a given pair of paths interfere with each other. Using the resulting interference information, we formulate the topology inference problem as an integer program. We develop polynomial time algorithms to solve it optimally for networks with special topologies, namely tree or ring topology. Both of these algorithms obtain the minimal version of the network, even when the original network is not minimal.
We also develop a heuristic that attempts to recover any general topology in polynomial time. 

The main contributions of this paper can be summarized as follows:
\begin{itemize}
\item We use the interference pattern of the paths to formulate an integer linear program (ILP) that obtains the network that has the fewest number of links and supports the given interferences. The solution provides a new method to discover a general network topology. 
\item We provide an upper bound, a lower bound and a sufficient condition for optimality for the ILP.
\item We design two polynomial time algorithms to recover tree and ring networks and show that if the network is in fact a tree or a ring, the algorithms solve the ILP optimally. 
\item Building upon the tree and the ring algorithms, we develop a polynomial time heuristic to identify general networks. Using simulations we show that this method outperforms the RGD1 algorithm of \cite{kelner}.
\end{itemize}

\section{Model}
\subsection{Network Model}
We model the network as a graph $G=(N,E),$ where $N$ is the set of nodes and $E$ is the set of edges. We assume that all the links in the network are bidirectional and have unit capacity. Each bidirectional link $\{i,j\}$ is composed of two directed links $(i,j)$ and $(j,i)$. The network has two types of nodes: the overlay nodes, which represent hosts and can be controlled, and the underlay nodes, which represent routers that are uncontrollable and do not provide any direct feedback. We represent the set of overlay nodes by $\mathcal{O}$ and the set of underlay nodes by $U$, and $N = \mathcal{O} \cup U$. We further assume that each overlay node is connected to only one underlay node. Other that this, we do not have any knowledge of the structure of the network. The main goal of this paper is to recover the graph $G$ from data measured at the overlay nodes.

All the overlay nodes are connected to each other by tunnels, which are paths that go through the underlay nodes. A tunnel $l = (l_1, l_2, ..., l_{|l|})$ consists of overlay nodes $l_1$ and $l_{|l|}$ and the rest of the nodes are underlay. Since, $l_1$ and $l_{|l|}$ are connected to only one underlay nodes, we will often refer to node $l_2$ as the parent of node $l_1$, $p(l_1)$, and node $l_{|l|-1}$ as the parent of node $l_{|l|}$, $p(l_{|l|})$. There are a total of $L = |\mathcal{O}| \times (|\mathcal{O}|-1)$ tunnels in the network. 


We also assume that each node $i \in N$ maintains a queue for each of it outgoing link $(i,j) \in E$. Packets from all the tunnels that uses the link $(i,j)$ gets enqueued in this queue when they reach node $i$ and are served on a first come first serve basis. 

\subsection{The Interference Matrix $\mathcal{F}$}
Our algorithm for recovering the graph $G$ is based on whether or not any two tunnels between the overlay nodes intersect with each other. In order to identify this we propose a simple method based on linear regression. We note that depending on the measurements available, other methods such as the ones from \cite{nowak,kelner} can also be used to derive this information.

Let $d_l(t)$ represent the delay experienced by a packet that enters tunnel $l$ at time $t$. Tunnels in the network can intersect with each other, hence, the path traversed by a tunnel can have packets belonging to itself and packets from other tunnels. Let $h_l(t)$ represent the number of packets that belong to tunnel $l$ that are still in the tunnel at time $t$. We will refer to these packets as packets in flight of tunnel $l$. The delay experienced by a packet entering tunnel $l$ at time $t$ is affected by the number of packets in that tunnel and other tunnels that intersect with it. Considering only a pair of tunnels $k$ and $l$, we can model the relationship between the packets in flight and delay as a linear function:
$$d_l(t) = h_l(t) + \alpha_{kl} h_k(t) + \eta_l.$$ 
Here $\alpha_{kl}$ represents the fraction of packets of tunnel $k$ that are in the path traversed by tunnel $l$ and $\eta_l$ is random perturbation (noise). 

By injecting randomly generated traffic into each pair of tunnels and measuring the packet delay and packets in flight, it is possible to determine if two tunnels intersect. In particular, using linear regression it is possible to calculate the optimal parameters $\alpha_{kl}$ that minimizes the noise for each pair of tunnel $(k,l)$.
When tunnels $l$ and $k$ do not intersect, the number of packets in tunnel $k$ does not affect the delay of packets entering tunnel $l$, hence, $\alpha_{kl} \approx 0.$ Otherwise, $\alpha_{kl}$ is closer to 1. We use these $\alpha_{kl}$ values to create the $L \times L$  binary interference matrix $\mathcal{F}$. If $\alpha_{kl} \approx 0$ then $\mathcal{F}_{kl} = 0$, and $\mathcal{F}_{kl} = 1$ otherwise. Moreover, $\mathcal{F}$ is symmetric, implying $\mathcal{F}_{kl} = \mathcal{F}_{lk}$.

We will use the graph representation of $\mathcal{F}$ in some of our results. We refer to such a graph as the interference graph of the network, $G_{\mathcal{F}}(N_{\mathcal{F}}, E_{\mathcal{F}})$. This graph is simply the graph formed by using $\mathcal{F}$ as an adjacency matrix, where $N_{\mathcal{F}}$ consists of tunnels and an edge exists between tunnels that interfere with each other.  An example of an interference matrix and its corresponding graph is given in Figure \ref{fig: minimality}.


\subsection{Minimal topology} \label{sec: minimality}
There exist many networks that produce the same interference matrix $\mathcal{F}$, hence, these networks are indistinguishable by our method. For example, each tunnel in the two networks shown in Figure \ref{fig: minimality} face the same interference. E.g. the tunnel $(1,...,2)$ only interferes with tunnel $(1,...,3)$ in both the networks. Hence, they produce the same $\mathcal{F}$ matrix. We are interested the smallest network, in terms of the number of links, that produces the given $\mathcal{F}$ matrix. We will call such a topology the minimal network topology. 

\begin{figure}[h]
    \centering
    \subfigure[Not minimal]{
        \centering
        \includegraphics[scale=.6]{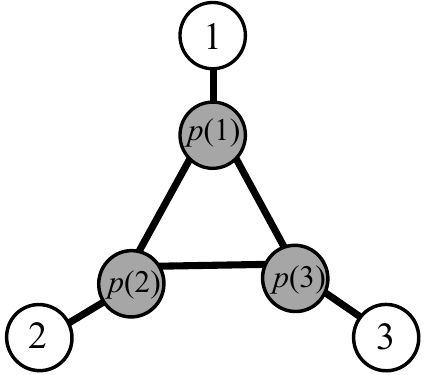}
        \label{fig: not_minimal}
	}
    \subfigure[Minimal]{
        \centering
        \includegraphics[scale=.6]{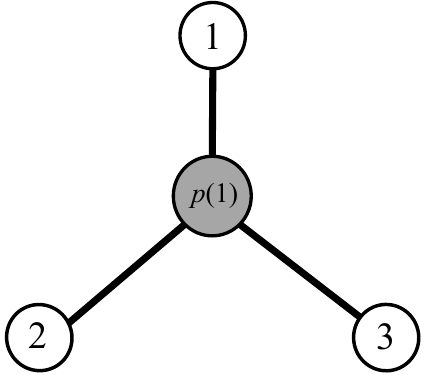}
        \label{fig: minimal}
	}
    \subfigure[The interference matrix $\mathcal{F}$.]{
        \centering
        \includegraphics[scale=.7]{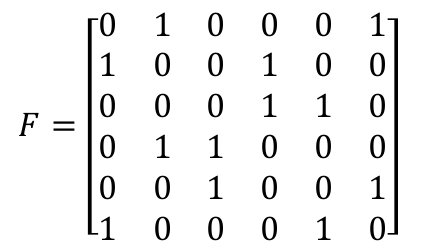}
        \label{fig: F}
	}
    \subfigure[The interference graph $G_\mathcal{F}$. Each node $(i,j)$ represents a tunnel $(i,...,j)$.]{
        \centering
        \includegraphics[scale=.7]{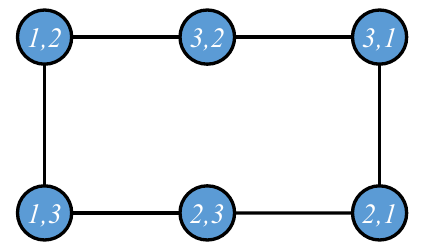}
        \label{fig: F}
	}
    \caption{Two topologies in Figures \ref{fig: not_minimal} and \ref{fig: minimal} produce the same $\mathcal{F}$ matrix. The white nodes are overlay and gray nodes are underlay, and the network uses the shortest path routing.}
	    \label{fig: minimality}
\end{figure}

A necessary condition for a network to be minimal was identified in \cite{kelner}. Specifically, all underlay nodes must have at least three neighbors. If an underlay has only one neighbor, we can simply remove it to obtain a smaller network that is indistinguishable from the original network by using only the measurements available at the overlay. If an underlay node has two neighbors, we can connect its two neighbors and remove the node in order to obtain a smaller network with the same properties. We note that this condition is not sufficient for minimality in general. E.g. in Figure \ref{fig: not_minimal}, all the underlay nodes have 3 neighbors but the topology is not minimal. We will provide a sufficient condition for minimality, and show that the necessary condition from \cite{kelner} is also sufficient for specific topologies, namely trees and rings.

In this paper we assume that the $\mathcal{F}$ matrix for a network $G(N,E)$ is given (i.e. obtained via measurements, as described earlier) and focus on obtaining the minimal network $\hat{G}^*(\hat{N}^*, \hat{E}^*)$ that supports this interference pattern.

\section{Integer Programming Formulation}
We formulate the problem of finding the minimal network that supports the given path interference pattern as an integer linear problem (ILP). Although a solution for this ILP is computationally intractable for large networks, studying this formulation will provide us with useful insights into the problem. Also, when the network is small, we are able to solve it optimally.

\subsection{Integer program}
Let us consider a network with $|\hat{N}|$ nodes.  Nodes $1,..., |\mathcal{O}|$ are overlay nodes and nodes $|\mathcal{O}| + 1, ..., |\hat{N}|$ are underlay nodes. Note that the set $\mathcal{O}$ is known a priori. 

Let $x_{ij}^l \in \{0,1\}$ represent whether link $(i,j)$ is used by tunnel $l$, for $1\le i,j \le |\hat{N}|$, $1 \le l \le L$, and $x^l_{ii} = 0 \, \forall l$. For notational simplicity, we define another variable $x_{ij}$ which represents whether the link $\{i,j\}$ is used by any tunnel in either direction. Hence, 
\begin{equation}
x_{ij} = \lor_{l}x_{ij}^l \lor x_{ji}^l, \qquad\qquad \forall i,j
\end{equation}

Here ``$\lor$'' is a logical OR operator. Note that such logical constraints can easily be transformed into a set of linear (integer) constraints \cite{optimization_book}. The objective function can be written as 
$$\text{minimize} \sum_{ij} x_{ij}.$$

Our network model assumes that each overlay node is connected to only one underlay link. This can be enforced by using the following constraint:
\begin{equation}
\sum_j x_{ij} = 1, \qquad  i =  1, ..., |\mathcal{O}|.
\end{equation}

Again, to simplify the notation we define two new variables. Let $s(l,j) \in \{0,1\}$ represent whether tunnel $l$ begins at node $j$, and let $d(l,j)$ represent whether tunnel $l$ ends at node $j$. These values are known a priori, so we can replace these variables with their respective values while formulating a specific problem.
Now we can write the next set of constraints which are essentially the flow conservation constraints. These constraints guarantee that each tunnel has a set of connected links in the network, starting and ending at its respective overlay nodes.
\begin{align}
\sum_i x^l_{ij} + s(l,j) = \sum_i x^l_{ji} + d(l,j), \qquad &j = 1, ..., |\hat{N}|, \nonumber\\
& l = 1, ..., L
\end{align}

We can see that the flow conservation constraints above allows loops to be formed in the network. Unlike max-flow type problems where loops can be removed in the post processing without harming the feasibility, removing them in our case can change the interference pattern of the tunnels. Hence we need to add constraints to avoid formation of loops.

Similar problems arise in the ILP formulation of the Travelling Salesman Problem (TSP). We use the technique originally proposed by Miller-Tucker-Zemlink in \cite{TSP} to resolve this issue in TSP and add the following constraints:
\begin{align}
u^l_i - u^l_j + |\hat{N}| x^l_{ij} \le |\hat{N}| -1, \qquad \forall i\ne j, l = 1, ..., L.
\end{align}
Here, the variables $u^l_i \ge 0$ is used to assign an order to each node $i$ in each tunnel $l$. If $x^l_{ij} =1,$ then $u^l_j \ge u^l_i + 1,$ so the next node $j$ is assigned a higher value than node $i$. 
Otherwise, $u_i^l - u^l_j \le |\hat{N}| - 1.$ This ensures that there are enough values to assign to all the nodes that the tunnel might pass through.

Finally we consider the interference constraints. For each tunnel pair $(k,l)$ we add a set of constraints depending on whether tunnels $k$ and $l$ interfere with each other. If tunnels $k$ and $l$ do not interfere we have the following constraints:
\begin{align}
x^k_{ij} + x^l_{ij} \le 1, \qquad \forall i,j, \text{ and } k,l: F(k,l) = 0.
\end{align}
This ensures that two tunnels that do not interfere with each are never assigned to the same link.
If $F(k,l) = 1$, then both the tunnels $k$ and $l$ must appear together in at least one of the links. We enforce this with the following constraints
\begin{align}
\sum_{i,j} x^k_{ij} \land x^l_{ij} \ge 1, \qquad \forall i,j, \text{ and } k,l: F(k,l) = 1.
\end{align}
Here ``$\land$'' is the logical $AND$ operator, and these constraints can also be transformed into a set of linear (integer) constraints.

The objective function along with the constraints (1) through (6) give the required ILP for identifying a minimal network. After solving the ILP, the graph can be recovered from the links $\{i,j\}$ for which $x_{ij} = 1$. A node that is not used by any of the tunnels can simply be removed from the recovered network.

\subsection{Example}
We consider a network where 6 underlay nodes are arranged to form a $3\times 2$ grid, and an overlay node is attached to each underlay node. The network uses the shortest path routing. The $30 \times 30$ interference matrix $\mathcal{F}$ is generated by determining whether two paths intersect with each other. We formulate the ILP with $|\hat{N}| = 12$ then solve it using the Gurobi solver \cite{gurobi}. 

\begin{figure}[h]
    \centering
    \subfigure[Original graph $G$.]{
        \centering
        \includegraphics[scale=.7]{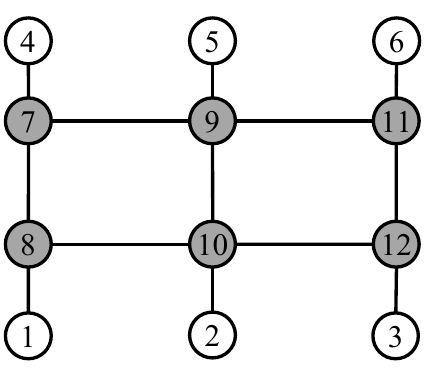}
        \label{fig: original1}
	}
    \subfigure[Recovered graph $\hat{G}$.]{
        \centering
        \includegraphics[scale=.7]{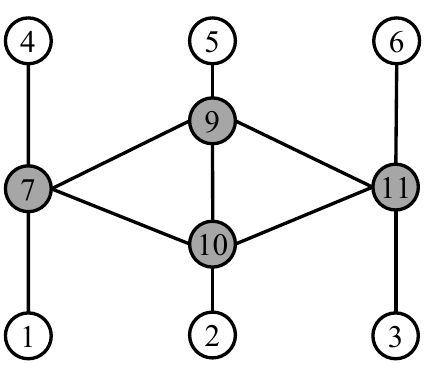}
        \label{fig: recovered1}
	}
    \caption{Recovering a network topology by solving the ILP.}
	    \label{fig: example1}
\end{figure}

The original and the recovered network are shown in Figure~\ref{fig: example1}. The recovered network has fewer nodes and edges than the original network. Link $\{7,8\}$ in the original network is used only by tunnels $(1,8,7,4)$  and $(4,7,8,1)$ in different directions. Hence there is no interference on this link, and it can be removed without changing the interference matrix. For the same reason, link $\{11,12\}$ can be removed to obtain the minimal network.
Even after the removal of the links, we can see that the recovered network looks quiet similar to the original.

\subsection{Upper bound}
We provide an upper bound on the solution to the ILP in the previous section by using a simple algorithm given in Algorithm \ref{alg: feasible}. This algorithm produces a feasible solution to the ILP by assigning two interfering tunnels to a new link in $\hat{G}$. This algorithm can be suboptimal because in an optimal solution many tunnels can interfere at the same link.

\begin{algorithm}[h!]
\caption{FeasibleGraph$(\mathcal{F}, \mathcal{O})$ for obtaining a feasible network $\hat{G}$}
\label{alg: feasible}
\flushleft{
\begin{enumerate}
\item Create a graph $\hat{G}$ with $|E_\mathcal{F}|$ edges in a line.
\item For each link $\{k,l\} \in E_\mathcal{F}$ assign tunnel $k$ and $l$ to a unique edge in $\hat{G}$. All the tunnels traverse the line graph in the same direction.
\item Connect links in $\hat{G}$ that have the same tunnel, if they aren't already connected, such that the tunnels form a loop free path. This can require either a single link, or a node and two links; see example in Figure \ref{fig: eg_greedy}.
\item Add nodes $\hat{\mathcal{O}} = \{\hat{o}_1, \hat{o}_2, ..., \hat{o}_{|\mathcal{O}|}\}$ to $\hat{G}$. Each node $\hat{o}_i \in \hat{\mathcal{O}}$ corresponds to an overlay node $o_i \in \mathcal{O}.$
\item For each node $\hat{o}_i$ add a parent node $p(\hat{o}_i)$ and edge $\{\hat{o}_i, p(\hat{o}_i) \}$ to $\hat{G}$.
\item For each tunnel $l$ that starts at $o_i$ assign tunnel $l$ to link $(\hat{o}_i, p(\hat{o}_i))$. 
\item For each tunnel $l$ that ends at  $o_i$ assign tunnel $l$ to link $(p(\hat{o}_i), \hat{o}_i)$.
\item Complete the tunnels by connecting $p(\hat{o}_i)$ to the partial tunnels formed in Step 3.
\end{enumerate}
}
\end{algorithm}

Algorithm \ref{alg: feasible} starts with a $\hat{G}$ that is a line graph  with $|E_\mathcal{F}|$ edges, then maps each link in the interference graph $G_{\mathcal{F}}$  to a link in $\hat{G}$. Each edge in $G_{\mathcal{F}}$ represents two tunnels that pass through the same edge in $G$, so if there is an edge between tunnels $k$ and $l$ in $G_{\mathcal{F}}$, then tunnels $k$ and $l$ are assigned to one of the links in $\hat{G}$. When all the interferences are assigned, it is likely that the same tunnel gets assigned to links that are not attached to each other. In such a case, new links are added to create complete tunnels. An example of this process (Steps 1-3) is given in Figure \ref{fig: eg_greedy}. At the end of Step 3, all the interference constraints are satisfied. Steps 4-8 add the overlay nodes and makes sure that each overlay node is connected to a single underlay node.

\begin{figure}[h]
    \centering
    \subfigure[Interference graph for tunnels $a$, $b$, $c$ and $d$.]{
        \centering
        \includegraphics[scale=.6]{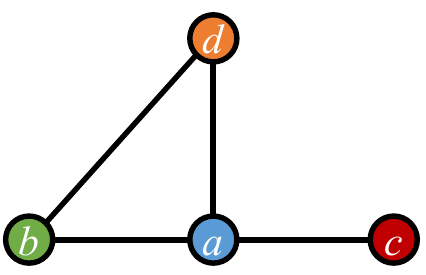}
        \label{fig: interference_greedy}
	}
    \subfigure[$\hat{G}$ after Step 2. Every pair of interfering tunnels are assigned to some link in the graph.]{
        \centering
        \includegraphics[scale=.8]{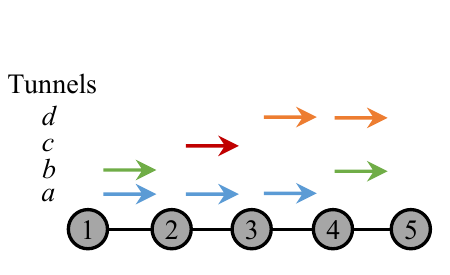}
        \label{fig: step2_greedy}
	}
 \subfigure[$\hat{G}$ after Step 3. Tunnels that are disconnected are connected using extra nodes and edges. ]{
        \centering
        \includegraphics[scale=.8]{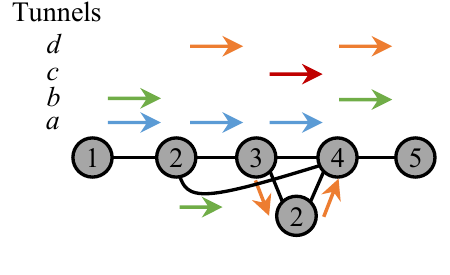}
        \label{fig: step3_greedy}
	}
    \caption{Example of an execution of Steps 1 to 3 of Algorithm 1.}
	    \label{fig: eg_greedy}
\end{figure}

We give the following lemma to show that Algorithm \ref{alg: feasible} produces a feasible solution to the ILP. Then Theorem \ref{thm: upper bound} establishes the upper bound on the number of links used by this algorithm.
\begin{lemma}
Algorithm \ref{alg: feasible} obtains a feasible solution to the ILP in Section 1.
\end{lemma}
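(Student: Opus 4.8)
The plan is to exhibit the output of Algorithm \ref{alg: feasible} as an explicit assignment of the ILP variables and then verify that every constraint (1)--(6) holds. Concretely, for each edge placed by Steps 2, 3 and 5--8 I set $x^l_{ij}=1$ for exactly those tunnels $l$ that the algorithm routes across the corresponding directed edge $(i,j)$ of $\hat{G}$, and the values $x_{ij}$ are then forced by the logical OR in (1). The order variables $u^l_i$ needed for the Miller--Tucker--Zemlin constraints (4) are obtained by numbering the nodes along each tunnel in the sequence in which that tunnel visits them. With these assignments fixed, feasibility reduces to checking the constraints one at a time.

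First I would dispose of the routine constraints. Constraint (1) holds by the definition of $x_{ij}$. Constraint (2) holds because Step 5 attaches to each overlay node $\hat{o}_i$ exactly one parent edge $\{\hat{o}_i, p(\hat{o}_i)\}$ and no other edge of $\hat{G}$ is incident to $\hat{o}_i$, so $\sum_j x_{\hat{o}_i j}=1$. For the flow-conservation constraints (3), I would argue that each tunnel $l$ is realized as a single directed walk from its source overlay node to its destination overlay node: Steps 6--7 attach the correct access edge at each endpoint, Step 2 places the interior interference edges, and Steps 3 and 8 stitch these fragments into one connected path. Since Step 3 is explicitly required to produce loop-free paths, each tunnel is a simple directed path, so the numbering $u^l_i$ increases strictly along the tunnel and (4) is satisfied with no subtour.

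The substance of the proof is the pair of interference constraints (5) and (6). Constraint (6) is immediate: every edge $\{k,l\}$ of the interference graph $G_{\mathcal{F}}$ is mapped in Step 2 to a dedicated edge of $\hat{G}$ that carries exactly tunnels $k$ and $l$, so $\sum_{ij} x^k_{ij}\land x^l_{ij}\ge 1$ for every pair with $F(k,l)=1$. For constraint (5) I would classify the edges of $\hat{G}$ by the step that created them and show that no directed edge is shared by two non-interfering tunnels. The Step-2 edges each carry precisely one interfering pair; the connecting edges added in Steps 3 and 8 each carry a single tunnel; and each access edge $(\hat{o}_i, p(\hat{o}_i))$, together with its reverse, carries only tunnels that start, respectively end, at $o_i$, which necessarily share the access link in the underlying network and therefore have $F=1$. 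Hence every pair of tunnels sharing a directed edge has $F=1$, which is exactly the contrapositive of (5).

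I expect the case analysis for (5) to be the main obstacle, since it requires arguing that the auxiliary links introduced in Steps 3 and 8 never inadvertently merge two tunnels onto a common directed edge, and that the mutually reversed tunnels $(o_i,\dots,o_j)$ and $(o_j,\dots,o_i)$, which may be non-interfering, are always routed on oppositely directed edges and so never collide in the directional inequality (5). The remaining constraints are essentially bookkeeping once the correspondence between the algorithm's edges and the ILP variables is fixed.
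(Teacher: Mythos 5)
Your proposal is correct and follows essentially the same route as the paper's own (very terse) proof: a constraint-by-constraint verification that Step 5 handles the degree constraint, the line-graph construction plus loop-free stitching handles flow conservation and the subtour constraints, and the per-interference edge assignment handles constraints (5) and (6). You are in fact more careful than the paper on one point it silently skips — observing that all tunnels sharing an overlay endpoint necessarily interfere in the original network, so piling them onto the single access link $(\hat{o}_i, p(\hat{o}_i))$ cannot violate constraint (5).
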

\begin{proof}
The proof of this lemma is given in Appendix \ref{app: lemma 1}.
\end{proof}

\begin{theorem} \label{thm: upper bound}
The number of edges required for a feasible solution of the ILP, $|\hat{E}| \le |E_\mathcal{F}| + 2L|E_{\mathcal{F}}| + |O| +  2L$.
\end{theorem}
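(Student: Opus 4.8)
The plan is to prove the bound by accounting for the edges introduced in each step of Algorithm~\ref{alg: feasible} separately and then summing the contributions; the four terms on the right-hand side will correspond exactly to Steps 1, 3, 5, and 8. First I would note that Step~1 creates the initial line graph with exactly $|E_\mathcal{F}|$ edges, which accounts for the first term. Steps 2, 4, 6, and 7 only assign tunnels to existing links or add nodes without introducing new connecting links (the overlay parent edges are charged to Step~5), so they contribute nothing further to the edge count.

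The crux of the argument is bounding the edges added in Step~3, which yields the dominant term $2L|E_\mathcal{F}|$. Here I would first observe that after Step~2 each edge of the line graph carries exactly the two tunnels forming its corresponding interference edge, so a fixed tunnel $l$ is assigned to precisely $\deg_{G_\mathcal{F}}(l)$ edges, and in particular to at most $|E_\mathcal{F}|$ edges. These (at most $|E_\mathcal{F}|$) disjoint segments of tunnel $l$ must be stitched into a single loop-free path, requiring at most $|E_\mathcal{F}| - 1$ connections. By the description of Step~3, each connection costs at most two links (either a single link, or a node and two links). Hence Step~3 adds at most $2(|E_\mathcal{F}| - 1) \le 2|E_\mathcal{F}|$ edges per tunnel, and summing over all $L$ tunnels gives at most $2L|E_\mathcal{F}|$ edges. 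I note this is deliberately a loose estimate, since a handshake-lemma argument ($\sum_l \deg_{G_\mathcal{F}}(l) = 2|E_\mathcal{F}|$) would give a sharper constant; the cruder per-tunnel bound suffices for the stated inequality.

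Finally I would account for the overlay attachment. Step~5 adds one parent node together with its incident edge $\{\hat{o}_i, p(\hat{o}_i)\}$ for each of the $|O|$ overlay nodes, contributing exactly $|O|$ edges. Step~8 must link each parent $p(\hat{o}_i)$ to the partial tunnels already built in Step~3; since every tunnel has a single source and a single destination, at most two such connecting links are needed per tunnel (one at each endpoint), giving at most $2L$ edges. Adding the four contributions $|E_\mathcal{F}|$, $2L|E_\mathcal{F}|$, $|O|$, and $2L$ then yields the claimed bound.

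I expect the main obstacle to be making the Step~3 count fully rigorous: in particular, justifying that the segments of a single tunnel can always be joined using at most two links per connection while preserving the loop-free structure, and verifying that the links added to complete one tunnel do not force additional links for another (i.e.\ that the per-tunnel charges do not double-count or interact). Since Lemma~1 already establishes that the construction is feasible and interference-preserving, I would lean on it to avoid re-proving correctness and focus purely on the additive edge-counting upper bound described above.
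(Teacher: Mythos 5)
Your proposal is correct and follows essentially the same route as the paper's proof: a step-by-step edge count attributing $|E_\mathcal{F}|$ to Step~1, at most $2|E_\mathcal{F}|$ extra edges per tunnel in Step~3 (hence $2L|E_\mathcal{F}|$ total), $|O|$ to Step~5, and $2L$ to Step~8. Your justification of the Step~3 term via the degree of each tunnel in $G_\mathcal{F}$ is slightly more detailed than the paper's one-line assertion, but the argument is the same.
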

\begin{proof}
The proof of this theorem is given in Appendix \ref{app: theorem 1}.
\end{proof}

\subsection{Lower bound}
We establish a lower bound on the number of edges in the minimal graph using the properties of the interference graph. In order to minimize the number of links, we want to assign as many interfering tunnels as possible to the same link. However, we cannot have two tunnels be assigned to the same link if they don't interfere with each other. This property is nicely abstracted by the cliques in the interference graph $G_\mathcal{F}$. The tunnels, represented by the nodes in $G_\mathcal{F}$, that are in the same clique interfere with each other. So we can assign all of them to the same link. A lower bound is given by the minimum number of cliques required to cover all the links. For example, two cliques are needed to cover all the edges of the interference graph in Figure \ref{fig: interference_greedy}, so we need at least two links in $\hat{G}$ to represent all the interferences. In graph theory the smallest such set is known as the {\em minimum edge clique cover}\footnote{This is different from the minimum node clique cover which is the smallest set of cliques required to cover all the nodes.}, and the size of such set is known as the {\em intersection number} of the graph \cite{roberts}. Computing the minimum edge clique cover of a graph is known to be NP hard so it might not be useful for the purpose of comparing our solutions. However, in the next subsection we will use it to derive conditions when a recovered graph achieves the lower bound and guarantee optimality.

The following lemma presents the lower bound result in terms of the number of directed links required to have a feasible solution. Theorem \ref{thm: cliques bound} extends this result to the case with undirected links, which  is the setup in this paper.

\begin{lemma} \label{lem: lower bound}
Let $|\hat{E}_D|$ be the number of directed links required for a feasible solution of the ILP.
Let C be the size of the minimum edge clique cover for the interference graph $G_\mathcal{F}$.
Then $|\hat{E}_D | \ge C$.
\end{lemma}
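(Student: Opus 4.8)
The plan is to establish $C \le |\hat{E}_D|$ by taking an arbitrary feasible solution of the ILP and constructing from it an edge clique cover of the interference graph $G_\mathcal{F}$ whose size does not exceed the number of directed links used. Since $C$ is by definition the minimum size of any edge clique cover, exhibiting one of size at most $|\hat{E}_D|$ immediately yields the bound $|\hat{E}_D| \ge C$.

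Concretely, I would fix a feasible assignment $\{x^l_{ij}\}$ and, for each directed link $(i,j)$ that carries at least one tunnel, define the tunnel set $S_{(i,j)} = \{\, l : x^l_{ij} = 1 \,\}$. There are at most $|\hat{E}_D|$ such sets, one per directed link. The central step is to verify that each $S_{(i,j)}$ is a clique in $G_\mathcal{F}$. For any two tunnels $k,l \in S_{(i,j)}$ we have $x^k_{ij} = x^l_{ij} = 1$; if $k$ and $l$ did not interfere, the non-interference constraint $x^k_{ij} + x^l_{ij} \le 1$ would be violated, a contradiction. Hence $\mathcal{F}_{kl} = 1$ for every such pair, so $S_{(i,j)}$ induces a clique in $G_\mathcal{F}$.

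Next I would show that the collection $\{S_{(i,j)}\}$ covers every edge of $G_\mathcal{F}$. Let $\{k,l\}$ be any edge, so $\mathcal{F}_{kl} = 1$. The interference constraint $\sum_{i,j} x^k_{ij} \land x^l_{ij} \ge 1$ guarantees the existence of at least one directed link $(i,j)$ with $x^k_{ij} = x^l_{ij} = 1$, whence both $k$ and $l$ belong to $S_{(i,j)}$ and the edge $\{k,l\}$ is covered by that clique. Combining this with the clique property, the family $\{S_{(i,j)}\}$ is a valid edge clique cover of $G_\mathcal{F}$ of size at most $|\hat{E}_D|$, and therefore $C \le |\hat{E}_D|$.

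The argument is conceptually direct; the only point requiring care is invoking both interference constraints in the right roles, since the non-interference constraint supplies the clique property while the interference constraint supplies the covering property. I do not anticipate a substantive obstacle beyond minor bookkeeping, such as noting that directed links carrying a single tunnel yield singleton cliques containing no edges, which neither affect the cover nor inflate the count past $|\hat{E}_D|$.
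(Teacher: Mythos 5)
Your proof is correct, and it takes a cleaner, more direct route than the paper's. You construct an explicit edge clique cover from a feasible solution: the set of tunnels riding each used directed link is a clique (forced by the non-interference constraint $x^k_{ij}+x^l_{ij}\le 1$), and every edge of $G_\mathcal{F}$ lands in one of these cliques (forced by the interference constraint $\sum_{i,j} x^k_{ij}\land x^l_{ij}\ge 1$), so $C\le|\hat{E}_D|$ by the minimality of $C$. The paper instead argues by contradiction: it first observes that each clique in a \emph{minimum} edge clique cover contains a ``unique'' edge belonging to no other clique of the cover, notes that each such edge must be realized on some directed link of $\hat{G}$, and then invokes pigeonhole to say that $|\hat{E}_D|<C$ would force two unique edges from distinct cliques onto the same link, yielding two non-interfering tunnels on one link. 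Both arguments rest on the same two facts about the ILP constraints, but yours buys a tangible advantage: the paper's last step (that the four tunnels involved must include a non-interfering pair) is not immediate and really needs an auxiliary argument that the link's tunnel set would otherwise be a single clique covering both unique edges, contradicting minimality of the cover; your direct construction sidesteps this entirely and needs no properties of the minimum cover beyond its definition. Your closing remark about singleton link-cliques is the right bookkeeping and the only caveat needed.
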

\begin{proof}
The proof of this lemma is given in Appendix \ref{app: lemma 2}.
\end{proof}

\begin{theorem} \label{thm: cliques bound}
Let $|\hat{E}|$ be the number of undirected links required for a feasible solution of the ILP.
Let C be the size of the minimum edge clique cover for the interference graph $G_\mathcal{F}$.
Then, $$|\hat{E}| \ge {C \over 2}.$$
\end{theorem}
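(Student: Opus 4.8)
The plan is to bootstrap directly off Lemma~\ref{lem: lower bound}, using the fact that the bound there is stated in terms of \emph{directed} links while the quantity we care about here counts \emph{undirected} links. The bridge between the two is the model assumption that every bidirectional link $\{i,j\}$ is composed of exactly the two directed links $(i,j)$ and $(j,i)$. So the entire argument reduces to relating $|\hat{E}_D|$ and $|\hat{E}|$ for a single feasible solution and then invoking the previous lemma.

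First I would fix an arbitrary feasible solution to the ILP and let $\hat{E}$ be its set of undirected links and $\hat{E}_D$ its set of directed links. The key observation is that each undirected link in $\hat{E}$ can give rise to at most two directed links in $\hat{E}_D$, namely the two orientations $(i,j)$ and $(j,i)$. Hence every directed link in $\hat{E}_D$ is one of the at most $2|\hat{E}|$ orientations available from the undirected links, which gives the inequality
\begin{equation}
|\hat{E}_D| \le 2\,|\hat{E}|.
\end{equation}

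Next I would apply Lemma~\ref{lem: lower bound}, which guarantees $|\hat{E}_D| \ge C$ for any feasible solution, where $C$ is the size of the minimum edge clique cover of $G_\mathcal{F}$. Chaining the two inequalities yields $2|\hat{E}| \ge |\hat{E}_D| \ge C$, and dividing by two gives the claimed bound $|\hat{E}| \ge C/2$. Since the feasible solution was arbitrary, the bound holds for every feasible solution and in particular for the minimal one.

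The main (and only nontrivial) obstacle is justifying the step $|\hat{E}_D| \le 2|\hat{E}|$ cleanly, i.e.\ confirming that no undirected link can contribute more than two directed links. This is immediate from the network model, where a bidirectional link is by definition the pair of its two orientations, so I do not expect any real difficulty beyond stating this correspondence carefully. Everything else is a direct substitution into Lemma~\ref{lem: lower bound}.
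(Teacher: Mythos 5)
Your proposal is correct and follows essentially the same route as the paper: both establish $|\hat{E}_D| \le 2|\hat{E}|$ from the fact that each undirected link contributes at most two directed orientations, and then chain this with Lemma~\ref{lem: lower bound} to obtain $|\hat{E}| \ge C/2$. No gaps.
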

\begin{proof}
The proof of this theorem is given in Appendix \ref{app: theorem 2}.
\end{proof}

\subsection{A sufficient condition for optimality}
We give a condition under which a the recovered network has the same number edges as the original network. When this condition is satisfied, the interference pattern cannot be achieved in a smaller network, so this result also provides a sufficient condition for minimality of a network. We prove this result by showing that if the condition is satisfied, then the recovered network achieves the lower bound developed in the previous subsection. We use this result in the subsequent sections to show that our polynomial time algorithms optimally solve the ILP for special networks.

The main result of this subsection says that a given network is minimal if every directed edge in the network is associated with a unique interference (interfering pair of tunnels). Intuitively, this condition seems reasonable because if it is satisfied then each directed link in the graph creates a unique clique in the minimum edge clique cover of the interference graph. 



\begin{lemma} \label{lem: cliques equality}
 The size of the minimum edge clique cover of $G_\mathcal{F}$, $C=2|E|$ if and only if for each directed edge $(i,j)$ there exists a pair of tunnels $k^{ij}$ and $l^{ij}$ such that they intersect at link $(i,j)$ and nowhere else.
\end{lemma}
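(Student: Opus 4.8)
The plan is to build the proof around a single canonical edge clique cover of $G_\mathcal{F}$ and compare it against both $2|E|$ and the private-interference condition. For each directed edge $e$ let $T(e)$ denote the set of tunnels that traverse $e$, and let $Q_e = T(e)$, viewed as a set of nodes of $G_\mathcal{F}$. Every pair of tunnels in $T(e)$ shares the directed edge $e$ and therefore interferes, so $Q_e$ is a clique; moreover, whenever two tunnels $k,l$ interfere they share some directed edge $e$, so the edge $\{k,l\}$ of $G_\mathcal{F}$ lies in $Q_e$. Hence $\mathcal{Q} = \{Q_e : |T(e)| \ge 2\}$ is an edge clique cover of $G_\mathcal{F}$, which already gives $C \le |\mathcal{Q}| \le 2|E|$. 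The one structural fact I will use repeatedly is that a private pair $\{k^{ij}, l^{ij}\}$, which by hypothesis intersects only at $(i,j)$, is contained in $Q_{(i,j)}$ and in no other $Q_{e}$: if both $k^{ij}$ and $l^{ij}$ belonged to $T(e)$ they would share $e$, forcing $e = (i,j)$.

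For the forward direction I argue by contraposition: assuming some directed edge $e_0$ has no private pair, I produce a cover smaller than $2|E|$. If $|T(e_0)| \le 1$ then $Q_{e_0} \notin \mathcal{Q}$ and $|\mathcal{Q}| \le 2|E|-1$ outright. Otherwise every pair $k,l \in T(e_0)$ also shares some edge $e' \ne e_0$, so each edge of $G_\mathcal{F}$ covered by $Q_{e_0}$ is already covered by another $Q_{e'} \in \mathcal{Q}$; deleting $Q_{e_0}$ therefore leaves a valid cover of size $|\mathcal{Q}| - 1 \le 2|E| - 1$. Either way $C \le 2|E| - 1 < 2|E|$, which is the contrapositive of the claim.

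For the converse I assume every directed edge carries a private pair. Then $|T(e)| \ge 2$ for every $e$, so $|\mathcal{Q}| = 2|E|$ and $C \le 2|E|$; it remains to prove $C \ge 2|E|$. I would fix an arbitrary edge clique cover $\mathcal{C}$ and assign to each directed edge $e$ a clique $\phi(e) \in \mathcal{C}$ that contains the private edge $\pi_e = \{k^{ij}, l^{ij}\}$. If $\phi$ is injective then $|\mathcal{C}| \ge 2|E|$ and we are done. Injectivity reduces to showing that a single clique of $G_\mathcal{F}$ cannot contain two distinct private edges $\pi_e$ and $\pi_{e'}$ with $e \ne e'$; equivalently, that the only maximal clique containing $\pi_e$ is $Q_e$, so that any clique covering $\pi_e$ is contained in $T(e)$. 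Granting this, a common clique $Q$ covering both $\pi_e$ and $\pi_{e'}$ would satisfy $Q \subseteq T(e) \cap T(e')$, whence $k^{ij}, l^{ij} \in T(e')$ would share $e'$, contradicting $e \ne e'$.

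The main obstacle is exactly this structural claim: that any clique of $G_\mathcal{F}$ containing a private pair consists solely of tunnels that traverse the shared edge $e$. A purely set-theoretic argument does not suffice, since four tunnels can pairwise interfere at six distinct edges while two disjoint pairs among them remain private, so the claim can fail for an arbitrary intersecting family. The extra leverage I expect to need is that each tunnel is a simple directed path, hence visits every node at most once and leaves each node by at most one edge; together with the hypothesis that every directed edge, including the ``connecting'' ones, carries its own private interference, this should preclude the offending configurations and force every member of such a clique through $e$. Making this last implication precise, rather than the clique-cover bookkeeping, is where the real work lies.
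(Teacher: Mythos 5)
Your proposal is incomplete: the hard direction is not actually proved. The two easy pieces are fine and coincide with the paper's argument --- the canonical cover $\mathcal{Q}=\{Q_e\}$ gives $C\le 2|E|$, and your contrapositive step (drop $Q_{e_0}$ when the edge $e_0$ carries no private pair, since every edge of $G_\mathcal{F}$ it covers is re-covered by some other $Q_{e'}$) is exactly the paper's argument for the ``$C=2|E|$ implies the condition'' direction. But for ``condition implies $C\ge 2|E|$'' everything hinges on the injectivity of your map $\phi$, i.e.\ on the claim that no single clique of $G_\mathcal{F}$ can contain two distinct private edges $\pi_e$ and $\pi_{e'}$. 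You do not prove this; you explicitly defer it (``where the real work lies''), and the route you sketch toward it --- showing that any clique containing $\pi_e$ is contained in $T(e)$ --- is strictly stronger than what you need and is false for general path systems: a third tunnel can meet $k^{ij}$ at one link and $l^{ij}$ at a different link without ever traversing $(i,j)$. Indeed your own four-tunnel observation (four paths pairwise intersecting at six distinct links, with two disjoint private pairs among them) is precisely the configuration that would defeat the counting argument, and nothing in your write-up rules it out using the path structure or the hypothesis. A proof that stops at ``this is the real obstacle'' has not discharged the obstacle.

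For comparison, the paper's proof handles this same step by asserting that the edge $\{k^{ij},l^{ij}\}$ of $G_\mathcal{F}$ ``is not part of any other clique,'' and concluding that $Q_{ij}$ must appear in the minimum edge clique cover --- i.e.\ it invokes exactly the injectivity you identified, without the containment detour. So you have correctly located the crux of the lemma (arguably more honestly than the paper, which passes over it in one sentence), but locating it is not the same as closing it: as submitted, the lower bound $C\ge 2|E|$, and hence the ``if'' direction of the equivalence, remains unproved in your write-up. To complete it you must show directly that a clique of $G_\mathcal{F}$ containing both $\pi_e$ and $\pi_{e'}$ with $e\ne e'$ is impossible (or find an alternative set of $2|E|$ pairwise non-co-coverable edges of $G_\mathcal{F}$), rather than reducing to the false containment claim.
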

\begin{proof}
The proof of this lemma is given in Appendix \ref{app: lemma 3}.
\end{proof}

\begin{theorem} \label{thm: minimality}
Let C be the size of the minimum edge clique cover for the interference graph $G_\mathcal{F}$. Let $\hat{G}^*(\hat{N}^*, \hat{E}^*)$ be the optimal network obtained by solving the ILP.  If every directed link $(i,j)\in E$ has a pair of tunnels $k^{ij}$ and $l^{ij}$ such that they intersect at link $(i,j)$ and nowhere else, then $\hat{G}^*$ has the same number of edges as the original network, i.e. $|\hat{E}^*| = |E|$.
\end{theorem}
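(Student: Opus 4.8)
The plan is to sandwich $|\hat{E}^*|$ between a lower bound obtained from the clique machinery and a matching upper bound obtained from feasibility of the true network. The crucial observation is that the hypothesis of this theorem---every directed edge $(i,j)\in E$ carries a pair of tunnels $k^{ij},l^{ij}$ that intersect on $(i,j)$ and nowhere else---is verbatim the condition of Lemma~\ref{lem: cliques equality}. I would therefore first invoke Lemma~\ref{lem: cliques equality} to conclude that the minimum edge clique cover of $G_\mathcal{F}$ has size exactly $C = 2|E|$. Feeding this into Theorem~\ref{thm: cliques bound}, which asserts $|\hat{E}| \ge C/2$ for \emph{any} feasible solution, gives $|\hat{E}| \ge |E|$; in particular the optimal solution satisfies $|\hat{E}^*| \ge |E|$.

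For the reverse inequality I would argue that the original network $G$ is itself a feasible point of the ILP, so optimality of $\hat{G}^*$ forces $|\hat{E}^*| \le |E|$. Concretely, instantiate the ILP with a node budget $|\hat{N}| \ge |N|$ and set each $x^l_{ij}$ equal to the actual routing of tunnel $l$ through $G$. Since every tunnel is a simple (loop-free) path between its correct overlay endpoints, the flow-conservation constraints (3) hold and the Miller--Tucker--Zemlin constraints (4) are satisfied under the ordering that lists the nodes of each tunnel along its path. Because $\mathcal{F}$ was by construction derived from the interference pattern of the tunnels in $G$, the non-interference constraints (5) and the co-location constraints (6) are automatically met. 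Hence $G$ is feasible.

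Combining the two bounds yields $|\hat{E}^*| = |E|$, which is the claim. The main obstacle is not the inequality chain but the feasibility verification: one must check that the MTZ constraints genuinely admit a valid ordering $u^l_i$ for the true paths (they do, as each tunnel is simple) and that the fixed node count $|\hat{N}|$ of the ILP is large enough to embed $G$, so that $G$ actually lies in the feasible region being optimized over. Once this is granted, the result follows immediately from Lemma~\ref{lem: cliques equality}, Theorem~\ref{thm: cliques bound}, and optimality of $\hat{G}^*$.
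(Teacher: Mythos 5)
Your proof is correct and follows essentially the same route as the paper: invoke Lemma~\ref{lem: cliques equality} to get $C=2|E|$, apply Theorem~\ref{thm: cliques bound} for the lower bound $|\hat{E}^*|\ge |E|$, and use feasibility of the original network $G$ for the matching upper bound. The only difference is that you spell out the feasibility check (MTZ ordering, node budget) that the paper simply asserts; this is added care, not a different argument.
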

\begin{proof}
The proof of this theorem is given in Appendix \ref{app: theorem 3}.
\end{proof}

Note that this theorem provides a sufficient condition but it may not be necessary. That is, there may be graphs where $C < 2|E|$ but the ILP still produces a graph with $|E|$ edges. Also, if the number of edges in the optimal network obtained by solving the ILP is the same as the original network, then we know that the both the networks are minimal. Hence, we can use the condition in the theorem as a sufficient condition for minimality of a network.

\begin{corollary}
A network $G(N,E)$ is minimal if every directed link $(i,j)\in E$ has a pair of tunnels $k^{ij}$ and $l^{ij}$ such that they intersect at link $(i,j)$ and nowhere else.
\end{corollary}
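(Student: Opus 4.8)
The plan is to derive the corollary directly from Theorem~\ref{thm: minimality}, treating it as essentially a restatement of that result through the lens of the definition of minimality. Recall that a network is called minimal if it has the fewest possible edges among all networks that produce the given interference matrix $\mathcal{F}$, and that the ILP of Section~3 is constructed precisely to compute such a network: its optimal value $|\hat{E}^*|$ equals the minimum number of edges needed to support $\mathcal{F}$. The whole argument therefore reduces to connecting this ILP optimum to the original graph $G$.

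First I would observe that the original network $G(N,E)$ is itself a feasible point of the ILP. Since the matrix $\mathcal{F}$ is, by assumption, the interference matrix generated by $G$, the edge assignment induced by the actual tunnels in $G$ satisfies the flow-conservation constraints~(3), the loop-elimination constraints~(4), and both families of interference constraints~(5)--(6); we take the number of ILP nodes $|\hat{N}|$ large enough (e.g.\ at least $|N|$, as guaranteed by the bound in Theorem~\ref{thm: upper bound}) so that $G$ embeds as a feasible solution. Consequently the optimal ILP value obeys $|\hat{E}^*| \le |E|$. Next I would invoke Theorem~\ref{thm: minimality}: under the stated hypothesis that every directed link $(i,j)\in E$ carries a pair of tunnels $k^{ij}$ and $l^{ij}$ intersecting only at $(i,j)$, the theorem yields the exact equality $|\hat{E}^*| = |E|$.

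Finally I would conclude by combining these two facts. Because $\hat{G}^*$ realizes the minimum edge count over all networks supporting $\mathcal{F}$, the equality $|\hat{E}^*| = |E|$ shows that no network producing $\mathcal{F}$ can have fewer than $|E|$ edges. Since $G$ itself produces $\mathcal{F}$ and has exactly $|E|$ edges, it attains this minimum and is therefore minimal by definition, which is the claim.

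I do not expect a genuine obstacle here, since the hard combinatorial work is already absorbed into Theorem~\ref{thm: minimality} (and, underneath it, Lemma~\ref{lem: cliques equality} and the clique-cover lower bound of Theorem~\ref{thm: cliques bound}). The only point requiring care is the feasibility/embedding step: one must make explicit that the true network $G$ lies in the ILP's feasible region so that $|\hat{E}^*|\le|E|$ is legitimate, and that the ILP objective genuinely equals the minimum edge count that defines minimality. Once that identification is made, the corollary is immediate.
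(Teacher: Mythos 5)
Your proposal is correct and follows exactly the route the paper takes: invoke Theorem~\ref{thm: minimality} to get $|\hat{E}^*| = |E|$ under the stated condition, note that the ILP optimum is by construction the minimum edge count over all networks supporting $\mathcal{F}$, and conclude that $G$ attains this minimum and is therefore minimal. Your explicit remark that $G$ must be verified to lie in the ILP's feasible region (so that $|\hat{E}^*|\le|E|$) is a detail the paper leaves implicit, but it is the same argument.
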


\section{Identifying Trees} \label{sec: tree}
We design a polynomial time algorithm to recover a tree network. If $G$ is a minimal tree, i.e. every non leaf nodes have at least three neighbors and all the leaf nodes are overlay, then this algorithm recovers the tree exactly. A similar result on recovering trees by using distance between the leaf nodes is given in \cite{singh}, however, the algorithm of \cite{singh} requires the network to be minimal. In the situation when the network $G$ is a non-minimal tree, our algorithm produces a $\hat{G}$ that is a minimal tree corresponding to $G$ since both the networks have the same $\mathcal{F}$ matrix. Note that there is a unique minimal tree corresponding to each non-minimal tree which can be obtained by using the process discussed in Section \ref{sec: minimality}.

\subsection{Algorithm}
%
%
The tree identification algorithm is given in  Algorithm \ref{alg: tree}. The algorithm uses the interference matrix $\mathcal{F}$ to obtain a tree graph $\hat{G}$ with the same $\mathcal{F}$. It begins by initializing the graph $\hat{G}$ and checking for terminating conditions in Steps 1 to 3. In Step 4, the algorithm identifies a node $k_1^*$ such that when all its siblings along with itself are removed, its parent becomes a leaf node. This property will later help us compute a new $\mathcal{F}$ matrix of the reduced graph. In Step 5, this algorithm finds a group of nodes $X_{k^*}$ that consists of all the sibling nodes of $k_1^*$. Procedure 3 is used to identify such nodes; see Lemma \ref{lem: same parent} for proof. These nodes are then added to the recovered graph $\hat{G}$ in Step 6 by assigning then a common parent node, $p(X_k^*)$.

Steps 7 removes the sibling nodes in $X_k^*$ from the original network $G$. Since the graph $G$ is not available, the removal is done indirectly by removing the corresponding tunnels from the $\mathcal{F}$ matrix. Note that node $k_1^*$ is not removed, instead it is renamed as the parent of the group $p(X_k^*)$ in Step 8. This works because when all the siblings of $k_1^*$ are removed, the interference of the tunnels that start or end at $k_1^*$ is the same as the tunnels that start or end at its parent node. The algorithm is iteratively applied to the reduced $\mathcal{F}$ matrix until only one or two leaf nodes remain. 

%
%

\begin{algorithm}[h!]
\caption{IdentifyTree$(\mathcal{F}, \mathcal{O})$ for recovering a tree network}
\label{alg: tree}
\begin{enumerate}
\item Add the nodes in $\mathcal{O}$ to $\hat{G}$.
\item If $|\mathcal{O}| = 1$ return $\hat{G}$ .
\item If  $|\mathcal{O}| = 2$, add an edge between the two nodes in $\hat{G}$ and return $\hat{G}$ .
\item Identify the tunnel $k^*$ that interferes with the largest number of other tunnels, $k^* = \argmax_k \sum_l \mathcal{F}_{kl}$. Let $k^*_1$ be the first node of tunnel $k^*$.
\item For each node $i \in \mathcal{O}$, use Procedure \ref{alg: isSibling} to decide whether it has the same parent as $k^*_1$. Let $X_{k^*}$ be the set of nodes that successfully pass the test.
\item Add a new node $p({X_{k^*}})$ to $\hat{G}$. Connect $p({X_{k^*}})$ to the nodes in $X_{k^*}$ in $\hat{G}$.
\item For each node $i\in X_{k^*}, i\ne k^*_1$:
	\begin{itemize} 
		\item Remove rows and columns corresponding to all the tunnels starting or ending at $i$ from $\mathcal{F}$.
		\item Remove node $i$ from $\mathcal{O}$.
	\end{itemize}
\item Rename node $k^*_1$ to $p({X_{k^*}})$ so that any tunnel in $\mathcal{F}$ starting or ending at $k^*_1$ starts or ends at $p({X_{k^*}})$ respectively.
\item Goto Step 2.
\end{enumerate}
\end{algorithm}

An example of the graphs created after the first and the second iterations of this algorithm are shown in Figure \ref{fig: tree example}. In the first iteration, Step 4 identifies one of the tunnels that intersect with the most number of other tunnels, (5,...,1). So $X_k^* = \{5,6,7\}$ in obtained in Step 5. This avoids obtaining sibling groups such as $\{3,4\}$, which when removed does not make their parent a leaf node. Step 6 produces the $\hat{G}$ shown in Figure \ref{fig: recovered tree}, and Steps 7 and 8 result in the reduced tree shown in Figure \ref{fig: tree 1iteration}. The $\mathcal{F}$ matrix of the reduced tree is obtained by removing all the tunnels with nodes 6 and 7, then renaming node 5 to the parent node $p(5,6,7)$. Similarly the result of the second iteration is shown in Figures \ref{fig: tree 2iteration} and \ref{fig: recovered tree2}. Since there is only one group of siblings left in the graph after this iteration, the third iteration results in the $G$ with only one node. Also, the third iteration produces the $\hat{G}$ that is identical to the original graph in Figure \ref{fig: original tree}.

\begin{figure}[h]
    \centering
    \subfigure[Original network $G$.]{
        \centering
        \includegraphics[scale=.7]{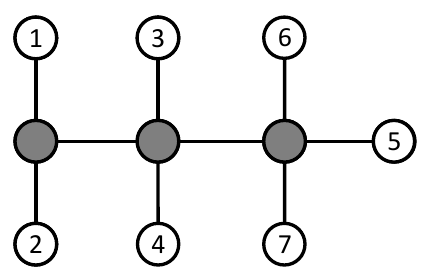}
        \label{fig: original tree}
	} \\
    \subfigure[Graph $G$ (implied by $\mathcal{F}$) after the first iteration.]{
        \centering
        \includegraphics[scale=.7]{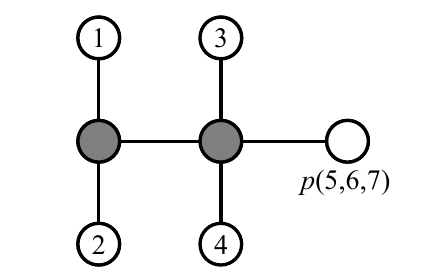}
        \label{fig: tree 1iteration}
	}
 \subfigure[Recovered graph $\hat{G}$ after the first iteration.]{
        \centering
        \includegraphics[scale=.7]{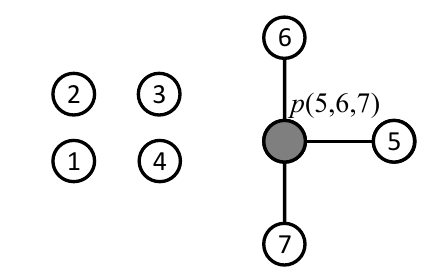}
        \label{fig: recovered tree}
	}
    \subfigure[Graph $G$ (implied by $\mathcal{F}$) after the second iteration.]{
        \centering
        \includegraphics[scale=.7]{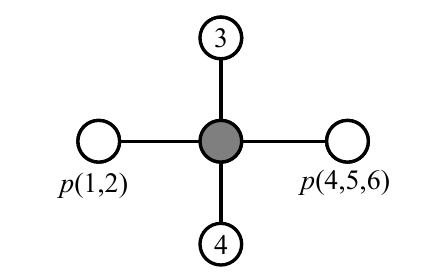}
        \label{fig: tree 2iteration}
	}
 \subfigure[Recovered graph $\hat{G}$ after the second iteration.]{
        \centering
        \includegraphics[scale=.7]{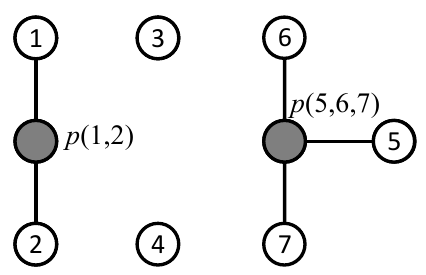}
        \label{fig: recovered tree2}
	}
    \caption{First two iterations of the tree identification algorithm. The third iteration (not shown) recovers the complete graph in Figure \ref{fig: original tree}.}
	    \label{fig: tree example}
\end{figure}

\subsection{Analysis}

In order to prove that Algorithm \ref{alg: tree} obtains the minimal tree, we first show that Step 4 identifies a node $k^*_1$ whose parent becomes a leaf node when we perform the node removal in Step 7. In Step 8 of the algorithm, this allows us to use the interference properties of the tunnels starting or ending at $k^*_1$ to obtain the interference of the tunnels of the parent node.

\begin{lemma} \label{lem: tunnel id}
Let $l = (l_1, l_2, ..., l_{|l|})$ be the tunnel that interferes with the largest number of other tunnels. When all the leaf nodes connected to $l_2$ are removed, $l_2$ becomes a leaf node in the resulting graph.
\end{lemma}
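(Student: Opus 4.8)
The plan is to reduce the interference count $\sum_l \mathcal{F}_{kl}$ of a tunnel to a simple combinatorial quantity and then show by an exchange argument that any tunnel achieving the maximum must start at a leaf whose parent has a single underlay neighbor.

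\emph{Step 1: a formula for the interference degree.} Since $G$ is a tree, each tunnel is the unique path between its endpoints, and all directed edges of tunnel $l$ lie on its spine $l_1, l_2, \ldots, l_{|l|}$. For any overlay node $v$, let $\rho(v) \in \{1, \ldots, |l|\}$ be the index of the spine node closest to $v$, so $\rho(l_1)=1$, $\rho(l_{|l|})=|l|$, and every other overlay node attaches to an interior spine node. I would first establish the clean claim that a tunnel $(x,y)$ interferes with $l$ if and only if $\rho(x) < \rho(y)$: the path of $(x,y)$ meets the spine along the subpath from $l_{\rho(x)}$ to $l_{\rho(y)}$, which it traverses in the forward direction (agreeing with $l$, hence sharing a directed edge) exactly when $\rho(x)<\rho(y)$, and otherwise shares no directed edge with $l$. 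Writing $n_i$ for the number of overlay nodes with $\rho=i$ and $n=|\mathcal{O}|$, the number of other tunnels interfering with $l$ is then
$$\sum_{i<j} n_i n_j - 1 = \tfrac12\Big(n^2 - \sum_i n_i^2\Big) - 1.$$

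\emph{Step 2: reformulating the goal.} Because $n$ is fixed, a tunnel interferes with the most others precisely when it \emph{minimizes} $\sum_i n_i^2$, i.e.\ spreads the overlay nodes across as many distinct spine positions as possible. Moreover, since $G$ is minimal (all leaves overlay, every underlay node of degree at least three), deleting every leaf attached to $l_2$ leaves exactly the underlay neighbors of $l_2$; hence ``$l_2$ becomes a leaf'' is equivalent to ``$l_2$ has exactly one underlay neighbor,'' namely $l_3$. The lemma thus reduces to: every minimizer $l$ of $\sum_i n_i^2$ has no underlay neighbor at $l_2$ other than $l_3$.

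\emph{Step 3: the exchange argument.} Suppose for contradiction that $l$ is a maximizer yet $l_2$ has a second underlay neighbor $w\neq l_3$. Minimality forces $w$ to have degree at least three, so the subtree $W$ hanging off $l_2$ through $w$ contains at least two overlay leaves; let $a\ge 2$ be their number. Pick one such leaf $c$ and form $l'=(c,\ldots,l_{|l|})$, whose spine equals that of $l$ from $l_2$ onward but replaces the single front node $l_1$ by the longer path $c,\ldots,w,l_2$. Since the spine is shared from $l_2$ on and every overlay node on that side reaches any front node only through $l_2$, the groups attached at $l_3,\ldots,l_{|l|}$ are identical for both spines and cancel in $\sum_i n_i^2$. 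Only the front differs: letting $t\ge 0$ count the overlay nodes attached directly at $l_2$ other than $l_1$ and those in $W$, tunnel $l$ contributes $1+(a+t)^2$ (the singleton $\{l_1\}$ plus the $l_2$-group $W\cup T$), while $l'$ contributes at most $1+(a-1)^2+(1+t)^2$, the worst case lumping all of $W$ except $c$ into one position and placing $\{l_1\}\cup T$ at $l_2$. The inequality $1+(a-1)^2+(1+t)^2 < 1+(a+t)^2$ reduces to $(a-1)(t+1)>0$, which holds because $a\ge 2$. Hence $\sum_i n_i^2$ strictly decreases, so $l'$ interferes with strictly more tunnels, contradicting the maximality of $l$.

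\emph{Main obstacle and remarks.} The technical heart is Step 1: pinning down that interference is the sharing of a \emph{directed} edge and translating it into the order condition $\rho(x)<\rho(y)$; after that the counting identity and the one-line inequality are routine. The role of minimality is essential, as it is exactly what guarantees $a\ge 2$; without it ($a=1$) the inequality degenerates to an equality, and indeed in non-minimal trees the maximizer need not have the stated property. The degenerate case in which $l_2$ has no underlay neighbor at all (a star) is excluded by the base cases in Steps 2--3 of Algorithm \ref{alg: tree}.
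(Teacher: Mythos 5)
Your proof is correct, and it drives at the same contradiction the paper does --- if $l_2$ had a second underlay neighbour $w$ off the spine, then a tunnel starting at a leaf inside the subtree through $w$ (your $c$, the paper's $n_1$) and ending at $l_{|l|}$ would interfere with strictly more tunnels than $l$ --- but the way you certify the strict increase is genuinely different. The paper argues qualitatively: it prunes the offending subtree to obtain an auxiliary graph $G'$ in which the offending neighbour and $l_1$ are symmetric leaves of $l_2$, asserts that the path from that neighbour to $l_{|l|}$ then matches the interference count of $l$, and adds the one extra interference with the tunnel between two leaves $n_1,n_2$ of the pruned subtree. You instead derive a closed form for the interference degree of a tunnel in a tree, $\tfrac12\bigl(n^2-\sum_i n_i^2\bigr)-1$, by reducing interference (sharing a \emph{directed} edge) to the order condition $\rho(x)<\rho(y)$ on spine positions, and then compare the two spine partitions directly, obtaining the exact gap $2(a-1)(t+1)>0$. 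Your route is longer but self-contained: it avoids the delicate step in the paper's argument of comparing counts taken in two different graphs ($G$ versus $G'$), and it isolates precisely where minimality enters (guaranteeing $a\ge 2$). The only loose end, which you flag yourself, is the degenerate star case in which $l_2$ has no underlay neighbour and becomes isolated rather than a leaf; the paper's statement and proof gloss over this case as well, so it is not a defect relative to the original.
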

\begin{proof} The proof of the lemma is gven in Appendix \ref{app: lemma 4}.
\end{proof}

The following lemma shows that Procedure 3 identifies the nodes that share the same parent. The main idea behind the proof is that a path between two nodes that share the same parent interferes with only the tunnels starting or ending at these nodes.

\begin{lemma} \label{lem: same parent}
Two leaf nodes of a tree $i$ and $j$ share the same parent if and only if the tunnel from $i$ to $j$ does not interfere with any tunnel $l$ such that $l_1 \ne i$ or $l_{|l|} \ne j$.
\end{lemma}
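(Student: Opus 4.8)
The plan is to characterize exactly which tunnels interfere with the tunnel $T_{ij}$ from $i$ to $j$, and then to read off both directions of the equivalence from this characterization. Throughout I use the operative fact from the model of Section II that two tunnels interfere if and only if they traverse a common directed link (they share a queue), together with two structural properties of the tree: a leaf has a single incident edge, so it can only be the opening or closing hop of any tunnel that touches it; and, in the minimal tree, every internal node has degree at least three. Since $G$ is a tree, $T_{ij}$ is the unique path from $i$ to $j$.

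For the ``only if'' direction, suppose $i$ and $j$ share a parent $p$. Then $T_{ij} = (i,p,j)$ uses exactly the two directed links $(i,p)$ and $(p,j)$. Because $i$ is a leaf, a tunnel can traverse $(i,p)$ only as its opening hop (a tunnel ending at $i$ uses $(p,i)$, and none passes through a leaf), so the tunnels using $(i,p)$ are precisely those with $l_1 = i$; symmetrically, the tunnels using $(p,j)$ are precisely those with $l_{|l|} = j$. Hence every tunnel interfering with $T_{ij}$ must start at $i$ or end at $j$, which is the stated condition.

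For the converse I argue by contrapositive: assuming $i$ and $j$ do not share a parent, I exhibit a tunnel that interferes with $T_{ij}$ yet neither starts at $i$ nor ends at $j$. Not sharing a parent forces the path from $i$ to $j$ to have length at least three (a length-two path would make its midpoint the common parent), so $T_{ij}$ contains a directed link $(u,v)$ whose endpoints are both internal nodes of the path. Removing the edge $\{u,v\}$ splits the tree into the side $S_u$ containing $u$ and $i$ and the side $S_v$ containing $v$ and $j$. By minimality, $u$ retains degree at least two within $S_u$, so $S_u$ is a tree in which $u$ is not a leaf; its leaves are therefore original leaves, and since one of them is $i$ there must be another leaf $a \ne i$. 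Symmetrically $S_v$ contains a leaf $b \ne j$. The tunnel $T_{ab}$ must cross $\{u,v\}$ from $S_u$ to $S_v$, hence traverses the directed link $(u,v)$ in the same orientation as $T_{ij}$ and interferes with it, even though $a \ne i$ and $b \ne j$. This violates the interference condition and completes the converse.

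The crux is this converse, and specifically securing the witness $T_{ab}$ with both endpoints off $\{i,j\}$: this is exactly where minimality is indispensable, since along a degree-two chain the only available leaves would be $i$ and $j$ themselves and the witness would collapse. I would therefore state and prove the lemma for the minimal tree (the structure actually encoded by $\mathcal{F}$), and take care on two points: choosing the internal link $(u,v)$ when the path is longer than three hops, and verifying that $T_{ab}$ crosses $\{u,v\}$ in the direction $u \to v$ rather than $v \to u$, since oppositely oriented links lie in different queues and do not interfere in this model.
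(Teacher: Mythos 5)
Your proof is correct and follows essentially the same route as the paper's: the forward direction observes that only tunnels starting at $i$ or ending at $j$ can use the two links of the length-two path, and the converse exhibits a witness tunnel between leaves $a \ne i$ and $b \ne j$ lying on the two sides of an internal edge of the $i$--$j$ path, exactly as the paper does with its leaves $i'$ and $j'$ hanging off $p(i)$ and $p(j)$. Your explicit appeal to minimality (internal degree at least three) to secure those witness leaves, and your check that the witness crosses the shared link in the same orientation, are both points the paper leaves implicit, so there is no gap.
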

\begin{proof}
The proof of this lemma is given in Appendix \ref{app: lemma 5}.
\end{proof}

\floatname{algorithm}{Procedure}
\begin{algorithm}[h]
\caption{AreSiblings$(\mathcal{F}, i,j)$ for checking whether two nodes $i$ and $j$ share the same parent}
\label{alg: isSibling}
\begin{enumerate}
\item Let $k$ be the tunnel going from node $i$ to $j$.

\item  For each tunnel $l$ in the network:

\qquad If $\mathcal{F}_{kl} = 1$ and $l_1 \ne i$ and $l_{|l|} \ne j$

\qquad \qquad Nodes $i$ and $j$ don't share the same parent.

\qquad \qquad  Return.
\item Let $k$ be the tunnel for $j$ to $i$ and repeat Step 2.
\item Nodes $i$ and $j$ share the same parent.
\end{enumerate}
\end{algorithm}
\floatname{algorithm}{Algorithm}

Now we prove the following theorem that shows that the algorithm recovers the minimal tree network. 

\begin{theorem}
If a given network $G$ is a minimal tree, then Algorithm \ref{alg: tree} recovers the network.
\end{theorem}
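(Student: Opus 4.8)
**

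The plan is to prove correctness by induction on the number of overlay nodes $|\mathcal{O}|$, since the algorithm is explicitly recursive: it peels off one sibling group per iteration and reduces the problem to a smaller tree. The base cases are handled directly by Steps 2 and 3 (one node, or two nodes joined by a single edge), so the inductive step is where the real work lies. I would assume as the inductive hypothesis that the algorithm correctly recovers any minimal tree on fewer than $|\mathcal{O}|$ overlay nodes, and then argue that one iteration of Steps 4 through 8 (i) correctly identifies a genuine sibling group whose removal yields a smaller minimal tree, (ii) attaches that group correctly in $\hat{G}$, and (iii) produces a reduced $\mathcal{F}$ matrix that is exactly the interference matrix of the reduced tree.

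The argument would proceed in three steps, each leaning on a lemma already established. First, I invoke Lemma \ref{lem: tunnel id}: the tunnel $k^*$ chosen in Step 4 as the one interfering with the most other tunnels has a first node $k_1^*$ whose parent $l_2$ becomes a leaf once all leaves attached to $l_2$ are removed. This guarantees that the sibling group we are about to peel off is a ``lowest'' group, so that collapsing it genuinely shrinks the tree while preserving minimality (the parent node $l_2$, now a leaf, is renamed and retained as an overlay node for the recursion). Second, I use Lemma \ref{lem: same parent} to show that Procedure \ref{alg: isSibling} in Step 5 collects \emph{exactly} the set $X_{k^*}$ of leaves sharing the parent of $k_1^*$: a node passes the sibling test if and only if it is a true sibling of $k_1^*$. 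Hence Step 6 attaches precisely the right leaves to a common parent $p(X_{k^*})$ in the recovered graph, reproducing that part of $G$ faithfully.

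The crux, and what I expect to be the main obstacle, is Step 8: justifying that renaming $k_1^*$ to the parent $p(X_{k^*})$ and deleting the rows and columns of the other siblings yields the \emph{correct} interference matrix for the reduced tree. I would argue this as follows. After removing the siblings in $X_{k^*}\setminus\{k_1^*\}$ from $G$, their parent $l_2$ becomes a leaf (by Lemma \ref{lem: tunnel id}), and the reduced graph is again a minimal tree with $k_1^*$ playing the role of that leaf. The key claim is that for any remaining node, a tunnel to or from $k_1^*$ traverses exactly the same internal edges as the corresponding tunnel to or from the new leaf $l_2$ would in the reduced tree, because once the other children of $l_2$ are gone, the only way in or out of $k_1^*$ is through $l_2$, and the edge $\{k_1^*, l_2\}$ carries no interference of its own (it is used solely by $k_1^*$'s tunnels). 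Therefore every interference entry among the surviving tunnels is preserved, and the renamed-and-pruned $\mathcal{F}$ is identically the interference matrix of the reduced minimal tree. This lets me apply the inductive hypothesis to the reduced instance, conclude that the recursive call recovers the reduced tree correctly, and then observe that Step 6's attachment of $X_{k^*}$ reinserts exactly the removed structure, so the final $\hat{G}$ equals $G$.

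One technical point I would be careful about is confirming that the reduced tree remains \emph{minimal} at each step, so that the inductive hypothesis applies verbatim: since $l_2$ had at least three neighbors in the original minimal tree and we remove all but one of its child leaves, $l_2$ becomes a degree-one (leaf) node, which is legitimate because it is now retained as an overlay node via the renaming of $k_1^*$; all other non-leaf nodes keep their degree, so the minimality condition (non-leaves have degree at least three, leaves are overlay) is maintained. With minimality preserved, the recursion is well-founded and the induction closes.
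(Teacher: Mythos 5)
Your proposal is correct and follows essentially the same route as the paper: it relies on Lemma \ref{lem: tunnel id} to justify the choice of $k^*$, on Lemma \ref{lem: same parent} (via Procedure \ref{alg: isSibling}) to show Step 5 collects exactly the sibling group, and on the observation that after pruning the siblings the tunnels of $k^*_1$ have the same interference as those of its parent, so the reduced $\mathcal{F}$ is the interference matrix of the reduced tree. The only difference is presentational: you phrase the paper's iterate-until-termination argument as an explicit induction on $|\mathcal{O}|$ and additionally verify that minimality is preserved at each step, which makes the recursion cleaner but does not change the substance.
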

\begin{proof}
The proof of this theorem is given in Appendix \ref{app: theorem 4}.
\end{proof}

Note that not only the recovered graph $\hat{G}$ is isomorphic to $G$, the relative positions of the overlay nodes are the same. That is if the overlay nodes $i$ and $j$ share the same parent in $G$, they also share the same parent in $\hat{G}$. Also, because of the fact that the $\mathcal{F}$ matrix for a non minimal tree is the same as that of the minimal version of the tree, and the minimal tree is unique for any non-minimal tree, we get the following corollary.

\begin{corollary}
If a given network $G$ is a non-minimal tree, then the tree $\hat{G}$ recovered by Algorithm \ref{alg: tree} is the unique minimal tree for $G$.
\end{corollary}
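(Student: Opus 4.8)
The plan is to reduce the non-minimal case entirely to the theorem immediately preceding this corollary, by exploiting the fact that Algorithm \ref{alg: tree} consumes only the interference matrix $\mathcal{F}$ and the overlay set $\mathcal{O}$, neither of which distinguishes a non-minimal tree from its minimal version. First I would invoke the reduction process of Section \ref{sec: minimality}: starting from the non-minimal tree $G$, repeatedly delete underlay nodes of degree one and suppress (short-circuit) underlay nodes of degree two until every internal node has at least three neighbors and every leaf is an overlay node. Call the resulting graph $G_m$. I would observe that this process terminates, since each step strictly decreases the number of nodes, and that it leaves $\mathcal{O}$ untouched, because it only ever removes underlay nodes.

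Second, I would argue that each reduction step preserves the interference matrix, so that $\mathcal{F}(G_m) = \mathcal{F}(G)$. A degree-one underlay node is a dead end that carries no tunnel traffic, so deleting it changes no tunnel's path; a degree-two underlay node is traversed by exactly the same set of tunnels on both of its incident edges, so merging those two edges into a single edge does not change, for any pair of tunnels, whether they share an edge. Hence no entry of $\mathcal{F}$ is altered. This is precisely the content of the necessary-condition discussion borrowed from \cite{kelner} in Section \ref{sec: minimality}.

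Third, I would pin down the uniqueness of $G_m$ so that the phrase ``the unique minimal tree for $G$'' is well defined. The cleanest route is to let uniqueness follow from the preceding theorem together with the determinism of the algorithm: if $G_1$ and $G_2$ were both minimal trees with $\mathcal{F}(G_1) = \mathcal{F}(G_2)$ and the same overlay set, then running Algorithm \ref{alg: tree} on the common input recovers $G_1$ (applying the theorem to $G_1$) and also recovers $G_2$ (applying it to $G_2$); since the output is a function of the input, $G_1 = G_2$. Thus at most one minimal tree is attached to a given $(\mathcal{F},\mathcal{O})$. Alternatively, one may simply cite the uniqueness already asserted in Section \ref{sec: minimality}.

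Finally I would assemble the pieces. Because $G$ and $G_m$ present the algorithm with the identical input $(\mathcal{F},\mathcal{O})$, it returns the same graph $\hat{G}$ on both; and because $G_m$ is a minimal tree, the preceding theorem guarantees that this output equals $G_m$ exactly. Therefore $\hat{G} = G_m$, the unique minimal tree for $G$, which is the claim. I expect the only genuine subtlety to be the invariance $\mathcal{F}(G_m) = \mathcal{F}(G)$ under the degree-two suppression step: one must verify carefully that both edges incident to a degree-two underlay node are used by exactly the same tunnels, so that the merge neither creates nor destroys any interfering or non-interfering relationship. Everything else is bookkeeping layered on top of the theorem.
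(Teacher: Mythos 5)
Your proposal is correct and follows essentially the same route as the paper: the paper derives this corollary from the observation that a non-minimal tree and its (unique) minimal version induce the same $\mathcal{F}$ matrix and overlay set, so the deterministic Algorithm~\ref{alg: tree} behaves as if run on the minimal tree, which the preceding theorem recovers exactly. Your additional care about why degree-one deletion and degree-two suppression preserve $\mathcal{F}$ is a sound elaboration of the same argument, not a different one.
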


The following corollary states that the graph generated by the tree algorithm solves the ILP optimally. This is true simply because all minimal trees satisfy the condition of Theorem \ref{thm: minimality}.
\begin{corollary}
If the interference pattern in a $\mathcal{F}$ matrix can be represented in a tree, Algorithm \ref{alg: tree} produces a $\hat{G}$ that solves the ILP optimally. 
\end{corollary}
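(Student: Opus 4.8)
The plan is to reduce this corollary to Theorem~\ref{thm: minimality} by way of the (unnumbered) theorem "If a given network $G$ is a minimal tree, then Algorithm~\ref{alg: tree} recovers the network" and its first corollary on non-minimal trees, which already guarantee that whenever $\mathcal{F}$ is tree-representable the algorithm returns a \emph{minimal} tree $\hat{G}$ whose interference matrix is exactly $\mathcal{F}$. It then suffices to show that every minimal tree satisfies the hypothesis of Theorem~\ref{thm: minimality}: each directed edge carries a private interference. Granting this, I would apply Theorem~\ref{thm: minimality} with $\hat{G}$ playing the role of the ``original network'' (legitimate since $\hat{G}$ produces $\mathcal{F}$), obtaining $|\hat{E}^*| = |\hat{E}|$ for the optimal ILP solution $\hat{G}^*$. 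Because $\hat{G}$ is itself a feasible assignment for the ILP (its unique tree paths satisfy flow conservation, the loop constraints, and the interference constraints) and has exactly $|\hat{E}|$ edges matching the optimal objective, $\hat{G}$ attains the optimum.

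The core step is therefore the claim that in a minimal tree, for every directed edge $(i,j)$ there exist two tunnels that share the directed link $(i,j)$ and no other directed link. I would prove this by deleting the undirected edge $\{i,j\}$, which splits the tree into the component $T_i$ containing $i$ and the component $T_j$ containing $j$; since tree paths are unique, a tunnel uses directed link $(i,j)$ precisely when it runs from an overlay node of $T_i$ to an overlay node of $T_j$. The idea is to select the two tunnels so their paths diverge immediately on both sides of $(i,j)$. Using minimality, every underlay node has degree at least three, so if $i$ is underlay it retains at least two neighbors inside $T_i$, meaning $T_i$ contains at least two branches hanging off $i$; choosing overlay endpoints $a_1$ and $a_2$ in distinct branches makes the paths from $a_1$ to $i$ and from $a_2$ to $i$ edge-disjoint, as they meet only at $i$. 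Performing the symmetric choice on the $j$ side to obtain $b_1,b_2$, the tunnels $a_1 \to b_1$ and $a_2 \to b_2$ share the directed link $(i,j)$ and nothing else.

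The boundary case is when $i$ (or $j$) is an overlay node, hence a leaf: then $T_i = \{i\}$ and the source is forced to $a_1 = a_2 = i$, but by the model $j = p(i)$ is underlay and so has degree at least three, leaving at least two branches on the $j$ side, so the tunnels $i \to b_1$ and $i \to b_2$ still diverge right after $(i,j)$. Handling this case, together with confirming that every branch terminates at an overlay leaf so that the required endpoints $a_1,a_2,b_1,b_2$ actually exist, is where the full minimal-tree hypothesis (all leaves overlay, every internal node of degree at least three) is essential; this is the main obstacle and the only place genuine combinatorial care is needed. With the claim established, the chain of implications above closes the argument, and I would dispose of the degenerate cases $|\mathcal{O}| \le 2$ directly, since Steps~2--3 of Algorithm~\ref{alg: tree} return the trivially minimal graph.
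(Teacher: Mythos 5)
Your proposal is correct and follows essentially the same route as the paper: the paper justifies this corollary in one line by asserting that every minimal tree satisfies the hypothesis of Theorem~\ref{thm: minimality}, and then invokes that theorem together with the fact that Algorithm~\ref{alg: tree} outputs a minimal tree realizing $\mathcal{F}$. Your branch-splitting argument (deleting $\{i,j\}$ and picking endpoints in distinct subtrees hanging off $i$ and $j$, with the leaf case handled via the degree-three condition on the parent) correctly supplies the proof of that asserted fact, which the paper itself omits.
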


Note that even when $G$ is not a tree, Algorithm \ref{alg: tree} can produce a tree as long as the interference can be represented by a tree. However if the interference pattern cannot be represented by a tree this algorithm will either fail Step 4, or the algorithm terminates but the recovered $\hat{G}$ has a different interference matrix than $\mathcal{F}$.

\section{Identifying Rings}
We now consider a situation when the $\mathcal{F}$ matrix cannot be represented in a tree. Specifically we consider a graph $G$ where the underlay nodes are arranged in a ring, and each underlay node is attached to exactly one overlay node. Also, we will assume that the network uses a shortest path routing algorithm, hence, the tunnels take the shortest paths between the overlay nodes. If the $\mathcal{F}$ matrix can be represented in a ring, our algorithm identifies the order of the overlay nodes. Note that knowing the order of the nodes gives more information than just recovering isomorphic graphs, e.g. in \cite{kelner}. Just like the tree discovery algorithm in the previous section, this algorithm can also be used to show that a particular network is not a ring.

\subsection{Algorithm}
The ring identification algorithm is given in Algorithm \ref{alg: ring}. This algorithm builds the ring in an incremental fashion. First, in Step 1 an overlay node $i$ and its parent node $p(i)$ are added to $\hat{G}$. The key idea behind the algorithm is in Step 2. It uses the $\mathcal{F}$ matrix to identify two overlay nodes in the ring that are closest to node $i$, i.e. two nodes such that their parents are neighbors of $p(i)$. In Steps 3 to 5 we attach the two nodes to their parents, and connect the parents to $p(i)$.

\begin{algorithm}[h!]
\caption{IdentifyRing$(\mathcal{F}, \mathcal{O})$ for recovering a ring network}
\label{alg: ring}
For each overlay node $i \in \mathcal{O}$:
\begin{enumerate}
\item If $i$ is not in $\hat{G}$, add two nodes $i$ and $p(i)$ to $\hat{G}$. Add an edge $\{i, p(i)\}$ to $\hat{G}$.
\item Identify two tunnels starting at node $i$ that interfere with the least number of other tunnels. Call these tunnels $k^*$ and $l^*$.
\item If node $k^*_{|k^*|}$  is not in $\hat{G}$, add two nodes $k^*_{|k^*|}$ and $p(k^*_{|k^*|})$. Add edge $\{k^*_{|k^*|}, p(k^*_{|k^*|})\}$.
\item Add edge $\{p(i), p(k^*_{|k^*|})\}$ to $\hat{G}$ if it doesn't exist.
\item Repeat Steps 3 and 4 for node $l^*_{|l^*|}$.
\end{enumerate}
\end{algorithm}

\subsection{Analysis}

We will show that Algorithm \ref{alg: ring} is guaranteed to recover the correct ring if $|\mathcal{O}| \ge 5$. For $|\mathcal{O}| = 3$, any ordering of the nodes is the same because the network links are bidirectional, so, using the algorithm is unnecessary. The algorithm might not produce the correct result for the a network with $|\mathcal{O}| = 4$ if the tunnels between the nodes in opposite sides pass through the same set of nodes. The networks in both of these situations with 3 or 4 overlay nodes in a ring are not minimal.

\begin{lemma}
Let $G$ be a graph where the underlay nodes are arranged in a ring, and each underlay node is connected to exactly one overlay node. Let $|\mathcal{O}| \ge 5$. Let $i$ and $j$ be two overlay nodes and $l$ be the tunnel from $i$ to $j$. Underlay nodes $p(i)$ and $p(j)$ are neighbors if and only if tunnel $l$ interferes with the fewest number of other tunnels.
\end{lemma}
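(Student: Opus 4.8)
The plan is to show that the interference count of a tunnel is a strictly increasing function of the ring distance between the parents of its endpoints, so that ``fewest interferences'' is equivalent to ``parents adjacent.'' First I would fix coordinates: label the underlay ring nodes $u_0,\dots,u_{n-1}$ in cyclic order, where $n=|\mathcal{O}|$, and write $o_m$ for the unique overlay attached to $u_m$, so that $p(o_m)=u_m$. Under shortest-path routing the tunnel $o_a\to o_b$ leaves on the directed spoke $(o_a,u_a)$, follows the directed ring edges of the shorter of the two arcs between $u_a$ and $u_b$, and enters $o_b$ on $(u_b,o_b)$. Since delays are coupled only through the per-directed-link queues, two tunnels interfere exactly when they share at least one directed link. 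Writing $r=r(a,b)$ for the ring distance between $u_a$ and $u_b$ (equivalently, the number of ring edges the tunnel uses), the statement ``$p(o_a)$ and $p(o_b)$ are neighbors'' is exactly ``$r=1$,'' so it suffices to prove that the interference count $I(o_a\to o_b)$ is strictly minimized precisely when $r=1$.

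The key step is a disjoint decomposition of the set of tunnels interfering with a fixed tunnel $t=o_a\to o_b$ according to which directed link is shared. Class (A): tunnels starting at $o_a$ (sharing the outgoing spoke); there are exactly $n-2$ of these, independent of $b$. Class (B): tunnels ending at $o_b$ but not starting at $o_a$ (sharing the incoming spoke); exactly $n-2$ of these. Class (C): all other interfering tunnels $o_c\to o_d$, which neither start at $o_a$ nor end at $o_b$ and therefore must share one of the $r$ directed ring edges of $t$. These three classes are disjoint and exhaust the interfering set, so $I(t)=2(n-2)+|C(r)|$, where by the rotational and reflectional symmetry of the ring $|C(r)|$ depends on $b$ only through $r$ (so I may take $a=0$ with increasing arc). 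Because a directed ring edge is used only by tunnels routed in that same orientation, $|C(r)|$ equals the number of increasing shortest-path tunnels whose edge-interval overlaps the $r$-edge arc of $t$, minus those excluded by the endpoint conditions $c\neq 0$ and $d\neq r$. The whole problem thus reduces to showing that $|C(r)|$ is strictly increasing on $\{1,\dots,\lfloor n/2\rfloor\}$.

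Carrying this out, the steps in order are: (1) record the directed-link set of each tunnel and verify the class decomposition is a genuine partition; (2) reduce to $t=o_0\to o_r$ by symmetry and express $|C(r)|$ as a cyclic interval-overlap count; (3) prove $|C(r)|<|C(r+1)|$, hence that $I$ is strictly increasing in $r$; (4) conclude that the global minimum of $I$ is attained exactly by the arc-length-one tunnels, i.e.\ exactly when $p(o_a)$ and $p(o_b)$ are neighbors, and note that from each start node there are precisely two such tunnels, matching Step 2 of Algorithm~\ref{alg: ring}.

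The main obstacle is the strict monotonicity in step (3). Passing from arc length $r$ to $r+1$ enlarges the ring-edge set by containment, which would suggest a simple nesting of interfering sets, but it simultaneously shifts the excluded incoming-spoke endpoint from $o_r$ to $o_{r+1}$, so the interfering sets are \emph{not} nested and a naive subset argument fails. I expect to resolve this either by evaluating $|C(r)|$ in closed form via the cyclic interval-overlap count, or by exhibiting an explicit injection from $C(r)$ into $C(r+1)$ together with one tunnel that interferes with the length-$(r+1)$ tunnel but not the length-$r$ one. This is also precisely where the hypothesis $|\mathcal{O}|\ge 5$ is used: it guarantees that arc lengths $1$ and $2$ both occur and have unique shortest paths, which rules out the degenerate $|\mathcal{O}|=3$ case (every pair adjacent) and the $|\mathcal{O}|=4$ case (antipodal tunnels with ambiguous, coinciding routes) and makes the minimum at $r=1$ strict and well defined.
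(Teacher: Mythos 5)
Your reduction is sound and takes a genuinely different route from the paper. The decomposition $I(t)=2(n-2)+|C(r)|$ is correct --- every interferer sharing a spoke with $t$ either starts at $o_a$ or ends at $o_b$, there are exactly $n-2$ of each, and they are disjoint --- so you correctly isolate all dependence on $b$ in the single quantity $|C(r)|$. The paper argues locally instead: it compares the tunnel $(1,\dots,2)$ directly with $(1,\dots,3)$, noting that interferers on the two shared links $(1,p(1))$ and $(p(1),p(2))$ coincide, that the interferers touching only the terminal spokes $(p(2),2)$ and $(p(3),3)$ are equinumerous by symmetry, and that $(2,\dots,4)$ interferes with $(1,\dots,3)$ at $(p(2),p(3))$ but not with $(1,\dots,2)$; longer tunnels are then dismissed with a ``clearly.'' Your global formulation is structurally cleaner and, once completed, yields the stronger fact that the interference count is strictly monotone in arc length rather than merely minimized at arc length one.

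That said, there is a genuine gap exactly where you flag it: step (3), the strict inequality $|C(r)|<|C(r+1)|$, is never carried out, and it is the entire content of the lemma. Your diagnosis that the sets are not nested is right --- for $r\ge 2$ the tunnel $o_1\to o_{r+1}$ lies in $C(r)$ but is excluded from $C(r+1)$ because it ends at the new endpoint --- but the natural repair, the shift injection $o_c\to o_{r+1}\mapsto o_{c-1}\to o_r$ on the lost tunnels, itself breaks on that same element: its image $o_0\to o_r$ starts at $o_0$ and is therefore also excluded from $C(r+1)$, so this case needs separate handling before you can even begin looking for the one extra interferer that makes the inequality strict. The closed-form alternative is workable (small cases give $|C(1)|,|C(2)|,\dots = 0,2$ for $n=5$ and $1,4,7$ for $n=7$) but must confront the routing ambiguity of antipodal tunnels, which persists for every even $n\ge 6$, not only the $n=4$ case you mention; for the lemma itself it suffices to prove $|C(1)|<|C(r)|$ for all $r\ge 2$, which sidesteps the top of the range. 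Until one of these is executed, the proposal establishes the framework but not the result.
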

\begin{proof}
The proof of this lemma is given in Appendix \ref{app: lemma 6}.
\end{proof}

Algorithm \ref{alg: ring} identifies overlay nodes whose parent nodes are neighbors and pieces them together into a ring. Hence, Theorem \ref{thm: ring} follows directly from the lemma above.
\begin{theorem} \label{thm: ring}
If the given network $G$ is a minimal ring, Algorithm \ref{alg: ring} recovers the network.
\end{theorem}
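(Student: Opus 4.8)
The plan is to treat the preceding lemma as the single nontrivial ingredient and reduce the theorem to verifying that Algorithm \ref{alg: ring} assembles the locally recovered adjacency information into exactly the ring $G$. Since $G$ is a minimal ring we have $|\mathcal{O}| \ge 5$, so the lemma applies at every overlay node. I would organize the argument in three parts: correctness of the local identification step, soundness of the edges placed in $\hat{G}$, and completeness (every edge of $G$ is eventually placed).

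First I would establish that Step 2 is well defined and selects the right endpoints. Fix an overlay node $i$. On the ring, $p(i)$ has exactly two neighbors, say $p(j_1)$ and $p(j_2)$. By the lemma above, a tunnel from $i$ to some $j$ interferes with the fewest number of other tunnels precisely when $p(i)$ and $p(j)$ are neighbors; hence the minimum interference count among tunnels leaving $i$ is attained by exactly the two tunnels terminating at $j_1$ and $j_2$. Thus the tunnels $k^*$ and $l^*$ chosen in Step 2 satisfy $\{k^*_{|k^*|}, l^*_{|l^*|}\} = \{j_1, j_2\}$, so the two selected endpoints are exactly the overlay nodes whose parents are adjacent to $p(i)$ in $G$.

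Next I would verify soundness and completeness together. For soundness, the edges $\{i,p(i)\}$, $\{k^*_{|k^*|}, p(k^*_{|k^*|})\}$, and $\{l^*_{|l^*|}, p(l^*_{|l^*|})\}$ are overlay-to-parent edges, which exist in $G$ by assumption, and under the guards in Steps 1 and 3 each such edge is added together with its node, so a node lies in $\hat{G}$ if and only if its overlay-parent edge does. The edges $\{p(i), p(k^*_{|k^*|})\}$ and $\{p(i), p(l^*_{|l^*|})\}$ added in Steps 4 and 5 are ring edges by the previous paragraph. For completeness, since the outer loop ranges over all $i \in \mathcal{O}$, every overlay node, hence its overlay-parent edge, is added; and for each $i$ both ring edges incident to $p(i)$ are placed. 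As every ring edge is incident to some processed $p(i)$, all ring edges appear, while the guards prevent duplicates. I would then conclude that $\hat{E}$ consists of exactly the overlay-parent edges and the ring edges of $G$, so $\hat{G}$ reproduces $G$ with the correct cyclic ordering of the overlay nodes.

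I expect the main obstacle to be entirely contained in the preceding lemma, whose proof must rule out, under shortest-path routing, any longer tunnel accidentally matching the minimum interference count of the two neighbor tunnels; this is exactly where the hypothesis $|\mathcal{O}| \ge 5$ enters and where the degenerate $|\mathcal{O}| = 3,4$ cases noted before the lemma break down. Granting that lemma, the only remaining care in the theorem is bookkeeping: confirming that exactly two tunnels attain the minimum so Step 2 is unambiguous, and checking that the per-node guards make the induction over the outer loop add each edge exactly once.
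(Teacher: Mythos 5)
Your proposal is correct and follows the same route as the paper: the paper likewise derives the theorem directly from the preceding lemma, noting only that the algorithm identifies overlay nodes whose parents are neighbors and pieces them together into a ring. Your write-up simply makes explicit the soundness/completeness bookkeeping that the paper leaves implicit.
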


Similar to the tree identification algorithm, this algorithm will produce a corresponding minimal ring if the original network is a non-minimal ring. This is true because both the rings have the same $\mathcal{F}$ matrix. Also, because a minimal ring satisfies the sufficient condition for minimality, this algorithm optimally solves the ILP for ring networks. Hence we get the following two corollaries.

\begin{corollary}
If a given network $G$ is a non-minimal ring with $|\mathcal{O}| \ge 5$, then the ring $\hat{G}$ recovered by Algorithm \ref{alg: ring} is the unique minimal ring for $G$.
\end{corollary}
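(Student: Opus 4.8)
The plan is to reduce the claim to the minimal-ring case already settled by Theorem \ref{thm: ring}, exploiting the fact that Algorithm \ref{alg: ring} reads only the interference matrix $\mathcal{F}$ and the overlay set $\mathcal{O}$. Since a non-minimal ring and its minimal counterpart produce the same $\mathcal{F}$ and share the same $\mathcal{O}$, the algorithm cannot distinguish them, so it returns exactly the ring that Theorem \ref{thm: ring} guarantees for the minimal version. The work thus splits into three pieces: constructing the minimal ring from $G$, showing that this construction preserves $\mathcal{F}$, and proving that the minimal ring is unique.

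First I would make precise what a non-minimal ring is. In a ring every underlay node already has its two ring neighbors, so the only way a node can fail the necessary condition of Section \ref{sec: minimality} is by having degree two, i.e. by carrying no overlay node. Hence a non-minimal ring arises from a ring by inserting such degree-two underlay nodes (edge subdivisions), and the minimal version $\hat{G}$ is obtained by the suppression rule of Section \ref{sec: minimality}: repeatedly delete each degree-two underlay node and join its two neighbors. This terminates in a ring in which every remaining underlay node carries an overlay node, and since suppression never changes the cyclic order of the overlay nodes the outcome is well defined.

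Next I would verify that suppression leaves $\mathcal{F}$ invariant. The key point is that any tunnel passing through a degree-two underlay node must use both of its incident edges, so the merged edge carries exactly the same set of tunnels as each of the two edges it replaces. Consequently, for every pair of tunnels the event ``they share an edge'' is unchanged, and every entry $\mathcal{F}_{kl}$ is preserved; this is precisely the observation underlying the necessary condition in Section \ref{sec: minimality}. I would also note that suppression never deletes an overlay node, so $\hat{G}$ has the same $\mathcal{O}$ and in particular $|\mathcal{O}| \ge 5$, which is what Theorem \ref{thm: ring} requires.

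It remains to establish uniqueness, which I expect to be the most delicate step. Uniqueness follows from the lemma preceding Theorem \ref{thm: ring}: in any minimal ring, two overlay nodes $i$ and $j$ have adjacent parents if and only if the tunnel from $i$ to $j$ interferes with the fewest other tunnels, a condition depending on $\mathcal{F}$ alone. Thus the set of adjacent-parent pairs is fixed by $\mathcal{F}$, and since each overlay node has exactly two such neighbors this pairwise-adjacency data determines a unique Hamiltonian cycle on $\mathcal{O}$ (up to reflection, which is the same undirected ring). Any minimal ring realizing $\mathcal{F}$ must therefore have this cyclic order, giving uniqueness. Combining the three pieces, running Algorithm \ref{alg: ring} on the $\mathcal{F}$ of $G$ is identical to running it on the $\mathcal{F}$ of $\hat{G}$, and by Theorem \ref{thm: ring} this returns the unique minimal ring for $G$. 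The main obstacle is the uniqueness argument, specifically confirming that the fewest-interference characterization pins down the entire cyclic order rather than merely local adjacencies.
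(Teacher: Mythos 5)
Your proposal follows the same route as the paper: the paper's own justification is precisely that the non-minimal ring and its minimal version share the same $\mathcal{F}$ matrix, so the algorithm behaves as in Theorem \ref{thm: ring}. You additionally spell out the invariance of $\mathcal{F}$ under suppression of degree-two underlay nodes and the uniqueness of the minimal ring via Lemma 7, details the paper asserts without proof, so this is a faithful (and more complete) version of the same argument.
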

\begin{corollary}
If the interference pattern in the $\mathcal{F}$ matrix with $|\mathcal{O}| \ge 5$ can be represented in a ring, Algorithm \ref{alg: ring} produces a $\hat{G}$ that solves the ILP optimally. 
\end{corollary}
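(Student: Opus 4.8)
The plan is to combine two facts already available: that Algorithm~\ref{alg: ring} outputs a feasible ring carrying the correct interference matrix, and that any minimal ring with $|\mathcal{O}|\ge 5$ meets the sufficient condition for optimality developed earlier. Concretely, suppose the pattern in $\mathcal{F}$ is realizable by a ring. If that ring is minimal, Theorem~\ref{thm: ring} guarantees that the recovered $\hat G$ is exactly this ring; if it is non-minimal, the preceding corollary guarantees that $\hat G$ is its unique minimal ring. In either case $\hat G$ is a minimal ring reproducing $\mathcal{F}$, hence a \emph{feasible} solution of the ILP. It then remains to show that this feasible $\hat G$ is optimal, i.e. that no network realizing $\mathcal{F}$ uses fewer edges.

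For optimality I would invoke the lower bound. By Lemma~\ref{lem: cliques equality} it suffices to verify that every directed edge of $\hat G$ carries a \emph{unique interference}: a pair of tunnels intersecting on that edge and nowhere else. I would establish this by an explicit construction on the ring. Label the underlay nodes $0,1,\dots,n-1$ around the ring with $n=|\mathcal{O}|$, let $o_i$ be the overlay node attached to underlay node $i$, and recall that under shortest-path routing the tunnel from $o_a$ to $o_b$ follows the shorter arc. For a directed ring edge $(i,i+1)$, the tunnels $o_{i-1}\to o_{i+1}$ and $o_i\to o_{i+2}$ both route clockwise through $(i,i+1)$ but otherwise use disjoint underlay edges and disjoint access links, so they intersect only on $(i,i+1)$. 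For a directed access edge $(o_i,i)$, the tunnels $o_i\to o_{i+1}$ and $o_i\to o_{i-1}$ share only $(o_i,i)$, and the reverse tunnels handle $(i,o_i)$. This covers every directed edge of $\hat G$.

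The hypothesis $|\mathcal{O}|\ge 5$ enters precisely here: the length-two arcs used above are the \emph{unique} shortest paths only when $n-2>2$, i.e. $n\ge 5$; for $n=4$ the antipodal tunnels have two equal-length routes and the construction (and ring recovery itself) can fail, matching the remark preceding Theorem~\ref{thm: ring}. With the unique-interference property established for all directed edges, Lemma~\ref{lem: cliques equality} gives $C=2|\hat E|$ for the minimum edge clique cover of $G_{\mathcal{F}}$. Theorem~\ref{thm: cliques bound} then forces every feasible network to use at least $C/2=|\hat E|$ undirected edges, so the feasible $\hat G$ attains the lower bound and is therefore ILP-optimal.

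I expect the main obstacle to be the edge-by-edge verification of the unique-interference condition: being careful that the exhibited tunnel pairs are genuine shortest paths (the source of the $|\mathcal{O}|\ge 5$ requirement) and that, apart from the single shared edge, they share neither any underlay edge nor any access link, in particular when the relevant indices are taken modulo $n$. Once this bookkeeping is complete, the reduction through Lemma~\ref{lem: cliques equality} and Theorem~\ref{thm: cliques bound} is immediate.
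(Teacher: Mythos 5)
Your proposal is correct and follows essentially the same route as the paper, which justifies this corollary by observing that a minimal ring satisfies the sufficient condition of Theorem~\ref{thm: minimality} (every directed edge carries a uniquely-interfering tunnel pair), so the recovered minimal ring attains the clique-cover lower bound of Theorem~\ref{thm: cliques bound}. Your explicit edge-by-edge verification of that condition (the pairs $o_{i-1}\to o_{i+1}$, $o_i\to o_{i+2}$ for ring edges and $o_i\to o_{i\pm 1}$ for access edges, valid precisely when $|\mathcal{O}|\ge 5$) is a detail the paper asserts without writing out, and it is accurate.
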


\section{Identifying General networks}
Inspired by the algorithms for identifying trees and rings in the previous sections, we develop a scheme for identifying general networks. A network can consist of trees and rings connected to each other. Our algorithm assumes that the network uses shortest path routing, and attempts to separate the trees from the rest of the graph, and identify these components separately. We will use Algorithm \ref{alg: tree} for recovering the trees, and we will design a new algorithm inspired by Algorithm \ref{alg: ring} for the non-tree components. Finally we will combine the discovered components to obtain the full network. This scheme is largely a heuristic, hence, we will compare its performance against another algorithm that also discovers general graphs. 
\subsection{Algorithm}
We first present Algorithm \ref{alg: multi ring} which is designed to recover a graph where every underlay node is part of one or more cycles and only one overlay node is attached to each underlay node. The algorithm works in similar fashion as the ring recovery algorithm from the previous section. The difference is that now each underlay node can have more than two underlay neighbors. So, for each overlay node $i$, the algorithm attempts to find all the overlay nodes whose parents are neighbors of $p(i)$. For clarity, we present this part of the algorithm separately in Procedure 6. 

The main idea behind Procedure 6 is shown in an example in Figure \ref{fig: three neighbors}. For Node 1, the procedure first identifies two neighbors of $p(1)$ using the tunnels that start at 1 and intersects with the fewest number of other tunnels. The intuition behind this is the same as the ring algorithm from the previous section, however, when there are more than one rings, it is not guaranteed that the shortest tunnels have the fewest number interferences. It is possible that the tunnel (1,...,5) intersects with the same number of tunnels as (1,...,3). After identifying the two neighbors, the procedure avoids any tunnels that pass through these neighbors and identifies other shortest tunnels.

\begin{figure}[h]
\centering
\includegraphics[scale=.6]{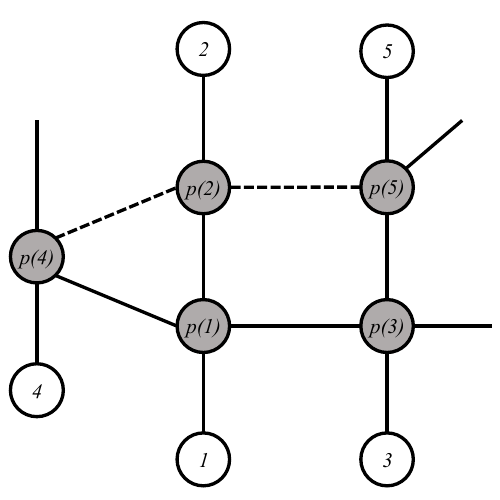}
\caption{Example of Procedure 6 at work. Node $p(1)$ has three neighbors $p(2), p(3)$ and $p(4)$. Procedure 6 first attempts to identify two nodes, e.g. 2 and 4, by minimizing the number of tunnel intersections. Then node 3 is identified by using the property that tunnel (3,...,1) doesn't interfere with the tunnels (2,...,4) or (4,..,2). }
\label{fig: three neighbors}
\end{figure}

\begin{algorithm}[h!]
\caption{IdentifyRings$(\mathcal{F}, \mathcal{O})$ for recovering a non-tree network}
\label{alg: multi ring}
Initialize $\hat{G}$ to empty graph.\\
For each overlay node $i \in \mathcal{O}$:
\begin{enumerate}
\item Obtain the to neighboring nodes, $R$ = allNeighbors($i$).
\item For each $r \in R$:
	\begin{enumerate}
		\item If node $r$ is not in $\hat{G}$ add nodes $r$ and $p(r)$  and edge $\{p(i), p(r)\}$ to $\hat{G}$.
		\item Add edge $(p(i), p(r))$ to $\hat{G}$ if it doesn't exist.
	\end{enumerate}
\end{enumerate}
\end{algorithm}

\floatname{algorithm}{Procedure}
\begin{algorithm} \label{proc: allNeighbors}
\caption{allNeighbors$(\mathcal{F}, \mathcal{O},i)$ for finding all $j$ such that $p(i)$ and $p(j)$ are neighbors}
\begin{enumerate}
\item Identify two tunnels starting at node $i$ that interfere with the least number of other tunnels. Add the end nodes of these tunnels to set $R$.
\item For each $n \in ({\mathcal{O} \backslash R})$, find a tunnel $k=(1,...,n)$ such that it interferes with the fewest number of tunnels and does not interfere with any tunnel $l$ such that $l_1, l_{|l|} \in R$.
\item If tunnel $k$ exists, add $k_{|k|}$ to $R$ and goto Step 2.
\item Return $R$.
\end{enumerate}
\end{algorithm}
\floatname{algorithm}{Algorithm}

Finally, we present Algorithm \ref{alg: general} for identifying networks with multiple rings and trees. In Step 2, this algorithm identifies sets of overlay nodes that could be a part of a tree using Procedure 3. Step 2(i) identifies the siblings, $X$, of node $i$. Step 2(ii) obtains the siblings of all the nodes in $X$. If $j$ is a sibling of $i$, then $i$ must also be a sibling of $j$. Using this property, Step 2(iii) attempts to reduce false positives. Step 2(iv) adds the nodes that are identified as part of a tree into the set of existing nodes. If some part of the tree containing the nodes in $S$ have already been identified, then these nodes must have one node in common with $S$, i.e. $S'$ exists. In such a case, nodes in $S$ is added to $S'$, otherwise $S$ is added as a new element $\mathcal{C}$. The tunnels belonging to all but one node in $S$ are removed from $\mathcal{F}$, and Step 2 is repeated on this new interference matrix. The completion of Step 2 produces the set $\mathcal{C}$ such that each element of $\mathcal{C}$ is a set of nodes that belong to the same tree.

Step 3 of the algorithm retrieves the original $\mathcal{F}$ matrix. Then in Step 4, Algorithm \ref{alg: tree} is used  on the elements of $\mathcal{C}$ to discover their corresponding trees. If the tree identification algorithm completes successfully, then all but one of the overlay nodes belonging to the tree are removed from the $\mathcal{F}$ matrix. The node that is not removed acts as an anchor node while combing the trees and the rest of the graph. In Step 5, the resulting $F$ matrix is then used in Algorithm 5 to recover the non-tree part of the graph. In order to combine a tree with the non-tree graph, in Step 6, the anchor node corresponding to the tree is found in the graph. Then in Steps 6(ii) and 6(iii), attempts are made to connect the tree to the anchor node at different locations in tree. The algorithm keeps the connection that minimizes the difference between the interference matrix of the resulting graph $\hat{G}$ and the original $\mathcal{F}$ matrix.

\begin{algorithm}[h!]
\caption{IdentifyGeneral$(\mathcal{F}, \mathcal{O})$ for recovering general networks}
\label{alg: general}
Initialize $\hat{G}$ to empty graph.
\begin{enumerate}
\item Let $\mathcal{C}$ be an empty set. Let $\mathcal{F'} = \mathcal{F}$.
\item For each $i \in \mathcal{O}$ :
	\begin{enumerate}[i.]
	\item Use Procedure 3 to find the set of nodes that share the same parent as $i$. Let $X$ be the set.
	\item For each $j \in X$ use Procedure 3 to find the set of nodes that share the same same parent as $j$. Let $X_j$ be the set.
	\item Let $S = X \cap (\cap_j X_j)$
	\item If $|S| > 1$,
		\begin{enumerate}[a)]
		\item Let $S' \in \mathcal{C}$  be a set of nodes such that $S' \cap S \ne \{\}$.
		\item If such $S'$ exists, $S' := S' \cup S$. Otherwise, $\mathcal{C} := \mathcal{C} \cup \{S\}$
		\item Let $x$ be an arbitrary node in $S$. Let $S := S \backslash \{x\}$. 	
		\item Remove tunnels $l$ from $\mathcal{F}$ if $l_1 \in S$ or $l_{|l|} \in S$. Let $\mathcal{O} := \mathcal{O} \backslash S$.
		\item Restart Step 2.
		\end{enumerate}
	\end{enumerate}
\item Let $\mathcal{F} := \mathcal{F'}$.
\item For each $S \in \mathcal{C}$:
	\begin{enumerate}[i.]
	\item Use Algorithm \ref{alg: tree} on the nodes in $S$. Let $T$ be the corresponding tree.
	\item If the algorithm fails to produce a tree, continue.
	\item Remove tunnels $l$ from $\mathcal{F}$ if $l_1 \in S$ or $l_{|l|} \in S$. Let $\mathcal{O} = \mathcal{O} \backslash S$.
	\end{enumerate}
\item Use Algorithm 5 on the remaining $\mathcal{F}$ to obtain $\hat{G}$.
\item For each tree $T \in \mathcal{T}$:
	\begin{enumerate}[i.] 
	\item Find the overlay node $i$ that is common to $T$ and $\hat{G}$.
	\item For each underlay node $j$ of $T$, add $T$ to $\hat{G}$ by replacing $i$ in $\hat{G}$ by node $j$ of the tree. Calculate the interference matrix for each $j$.
	\item For each underlay node $j$ of $T$ add $T$ to $\hat{G}$ by replacing $p(i)$ in $\hat{G}$ by node $j$ of the tree. Calculate the interference matrix for each $j$.
	\item Keep the $\hat{G}$ that produces the interference matrix closest to $\mathcal{F}$ in Steps ii and iii.
	\end{enumerate}
\end{enumerate}
\end{algorithm}

\subsection{Simulation result}
We compare the performance of Algorithm \ref{alg: general} against that of RGD1 algorithm from \cite{kelner}. For the implementation of RGD1, we obtain the exact length of each path by  using a shortest path algorithm. All links are assumed to have unit length. We choose the parameter $Rg + \tau$ to be 4. We also tried the value of 3 and 5 for this parameter, however, the performance was not as good. 

The graphs used to obtain the data for the simulation were generated to be similar to the random graphs considered in \cite{kelner}. We first generate an Erd\H{o}s-R\'enyi  random graph with parameters $\mathcal{G}(n,2/n)$. Then we find the largest connected component of the graph, and remove all the other nodes that do not belong to this component. We then attach overlay nodes to 80\% of the remaining nodes uniformly at random. Finally, we remove any underlay nodes that have degree less than 3 by using the process discussed in Section \ref{sec: minimality}. We generate 100 networks for each value of $n$, where $n = 10, 20, ..., 50$ and obtain the measurements required for both algorithms: distances for RGD1 and the $\mathcal{F}$ matrix for our algorithm. Finally, we use the measurements to recover the graphs.

The performance of the two algorithms was measured by computing the {\em edit distance} between the original graph $G$ and the recovered graph $\hat{G}$. Edit distance measures the number of links in $\hat{G}$ that needs to be added or removed in order to make it isomorphic to $G$. This metric is similar to the metric used in \cite{kelner} to obtain the asymptotic bounds of RGD1. Unfortunately, calculating the graph edit distance is an NP-hard problem, so we use an open source tool called GEDEVO \cite{gedevo} to approximate it.

\begin{figure}
    \centering
    \subfigure[Edit distances for all the iterations with n=10.]{
        \centering
        \includegraphics[scale=.36]{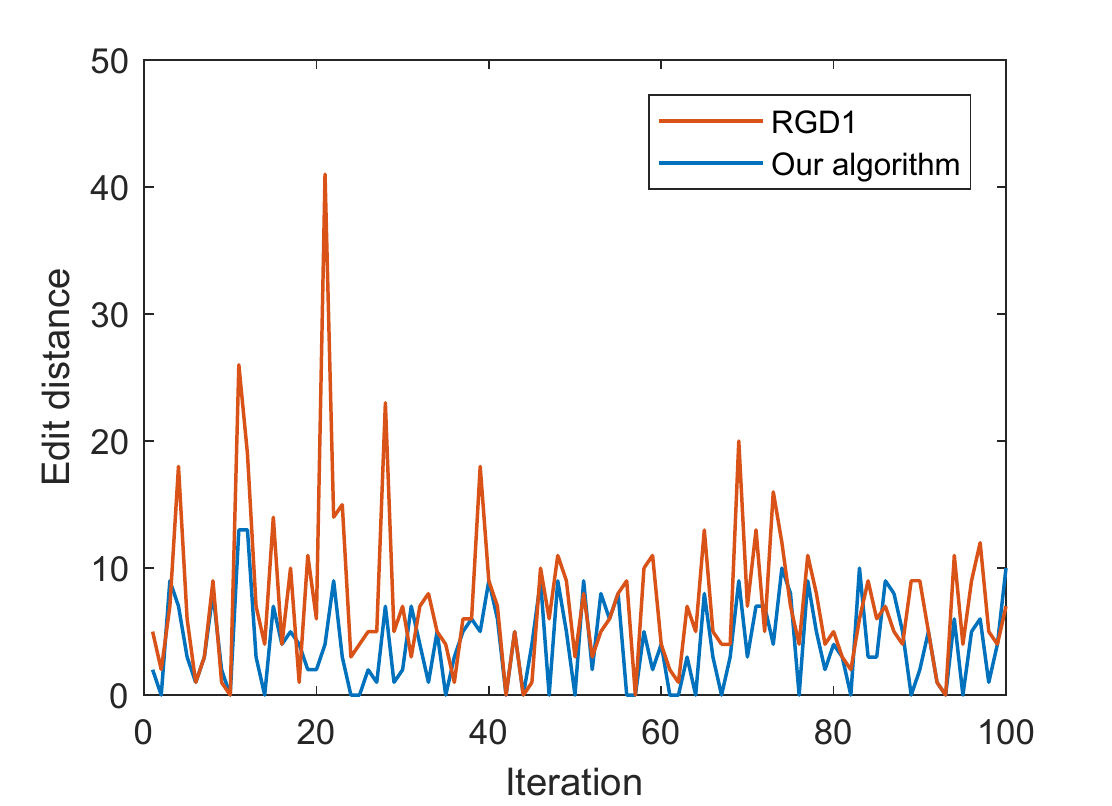}
        \label{fig: n10}
	}
    \subfigure[Edit distances for all the iterations with n=50.]{
        \centering
        \includegraphics[scale=.36]{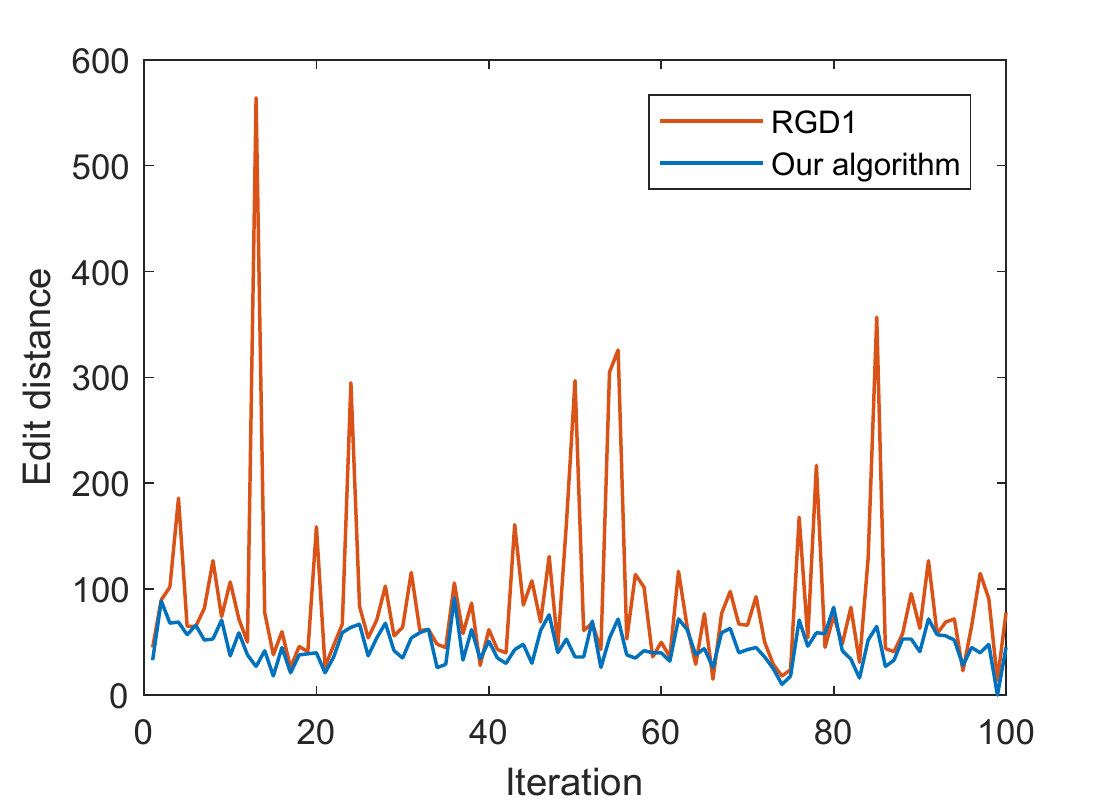}
        \label{fig: n50}
	}
 \subfigure[Average edit distance for each value of $n$.]{
        \centering
        \includegraphics[scale=.36]{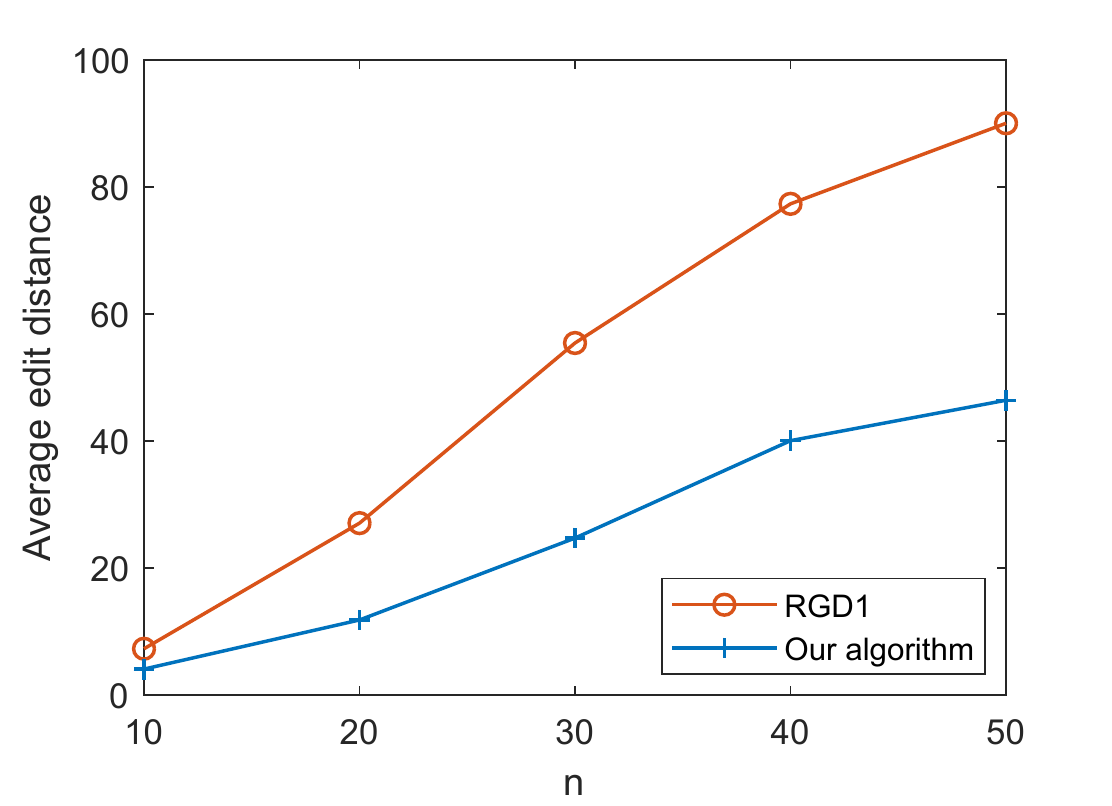}
        \label{fig: average final}
	}
    \caption{Comparison of Algorithm \ref{alg: general} and RGD1.}
	    \label{fig: simulation final}
\end{figure}

The results of the simulations are given in Figure \ref{fig: simulation final}. Figures \ref{fig: n10} and \ref{fig: n50} show the performance of the two algorithms for each of the 100 graphs that were generated. We can see that in most of the cases, our algorithm outperforms RDG1. Figure \ref{fig: average final} shows the average performance of the two algorithms across different values of $n$. Again, we can see that our algorithm outperforms RGD1.

\section{Conclusion}
We developed a new method for discovering the topology of a network. It uses the path interference information, which can be obtained by using the measurements available at the end nodes. Using the path interference, we formulated an integer linear program that finds a minimal graph that can contain all the interferences. We then developed polynomial time algorithms that solve the ILP for the special cases when the network is a tree or a cycle. Finally, we developed a heuristic for identifying general networks and compared its performance to a well known algorithm. Future research in the area will focus on developing better heuristics for general networks and providing performance guarantees.

\appendices
\section{Proof of Lemma 1}\label{app: lemma 1}
Steps 5 satisfies Constraint 2. Constraints 3 and 4 are satisfied because the initial graph formed in Step 2 is a line, and the rest of the steps never create branches or rings. Constraints 5 and 6 are satisfied because each interaction in the interference graph, i.e. the $\mathcal{F}$ matrix, is represented in one of the links in $\hat{G}$, and two tunnels that do not interfere are never assigned to the same link.

\section{Proof of Theorem 1} \label{app: theorem 1}
Lemma \ref{alg: feasible} shows that Algorithm 1 obtains a feasible solution. We need to establish the upper bound to prove the theorem.
Step 1 of the algorithm adds $|E_\mathcal{F}|$ edges to $\hat{G}$. The number of edges added by Step 3 in the worst case is upper bounded by $2L|E_{\mathcal{F}}|$ because each tunnel can require a maximum of $2|E_{\mathcal{F}}|$ extra edges. Step 5 adds $|O|$ edges, and Step 8 can add a maximum of $2L$ edges. Hence we get the required upper bound.

\section{Proof of Lemma 2} \label{app: lemma 2}
A clique $q$ in the minimum edge clique cover of a graph has at least one unique edge, i.e. an edge that is not a part of any other cliques. If this was not the case, then we can obtain a cover with fewer cliques simply by removing clique $q$. Because each edge represents an interference, each unique edge must be assigned to a different link in $\hat{G}$. 

If $|\hat{E}_D | < C$, then two unique edges of the interference graph have been assigned to the same edge of $\hat{G}$. This contains at least two tunnels that do not interfere with each other which violates the interference constraints in the ILP.

\section{Proof of Theorem 2} \label{app: theorem 2}
Given a graph with directed edges, we consider the problem of assigning the same tunnels in an undirected network. If every edge in the directed network is used by the tunnels in both direction, then $|\hat{E}_D | = 2|\hat{E}|$. That is links $(a,b), (b,a) \in \hat{E}_D$ become a single link $\{a,b\} \in \hat{E}$. However, in the directed network, some of the links can be used only in one direction. Hence, $|\hat{E}_D | \le 2|\hat{E}|$. The result follows directly from Lemma \ref{lem: lower bound}.

\section{Proof of Lemma 3} \label{app: lemma 3}
First we show that if there exists tunnel pairs $k^{ij}$ and $l^{ij}$ that intersect at link $(i,j)$ and nowhere else, then $C = 2|E|$. We know that $G$ provides a feasible solution to the ILP, hence from Theorem \ref{thm: cliques bound}, $C \le 2|E|$. Also, the interference graph $G_\mathcal{F}$ has a clique corresponding to each directed edge in G as long as some flows intersect in this link. It is sufficient to show that if the condition is satisfied then each clique in   $G_\mathcal{F}$ corresponds to a unique link $G$.

Let $Q_{ij}$ be the clique corresponding to the directed link $(i,j)$. $Q_{ij}$ has a link between the nodes $k^{ij}$ and $l^{ij}$, and this link is not part of any other clique. Hence, $Q_{ij}$ must be a clique in the minimum edge clique cover. This shows that there is one to one correspondence between the cliques in the minimum edge clique cover of $G_\mathcal{F}$ and the links of $G$.

Next we show that if $C = 2|E|$ then there exist tunnel pairs $k^{ij}$ and $l^{ij}$ that intersect at link $(i,j)$ and nowhere else. Let $L_{ij}$ be the set of all the tunnels that pass through at link $(i,j)$. Note that $L_{ij}$ must have at least two tunnels because if $L_{ij}$ has less than two tunnels then there is no clique corresponding to link $(i,j)$ giving $C < 2|E|$.

For contradiction, assume that every pair of tunnels $(k,l) \in L_{ij}$ also intersects at some other link $(x^{kl},y^{kl}) \in E$. Now we can consider a set of cliques corresponding to every link in the network other than link $(i,j)$ and cover all the edges in the interference graph giving $C< |E|$.

\section{Proof of Theorem 3} \label{app: theorem 3}
We know from Theorem \ref{thm: cliques bound} that $C \le 2|\hat{E}^*|$. We also know that the original network provides a feasible solution to the ILP, so $|\hat{E}^*|\le |E|$. Hence, $$C \le 2|\hat{E}^*|\le 2|E|.$$ 

By Lemma \ref{lem: cliques equality}, when the condition in the theorem statement is satisfied $C = 2|E|$. Hence by a sandwiching argument $|\hat{E}^*| = |E|$. 

\section{Proof of Lemma 4} \label{app: lemma 4}
The proof is by contraction. Assume that $l$ interferes with the most number of other tunnels, but when all the siblings of $l_1$ are removed $l_2$ is not a leaf node. Because of this assumption, $l_2$ has at least one neighbor node $n \not\in l$ such that $n$ is not a leaf node as shown in Figure \ref{fig: tree proof}. Since $G$ is a minimal tree the subtree of $n$, formed by removing the link $(n, l_2)$, has at least two leaf nodes $n_1$ and $n_2$. 

\begin{figure}[h]
\centering
\includegraphics[scale = 1]{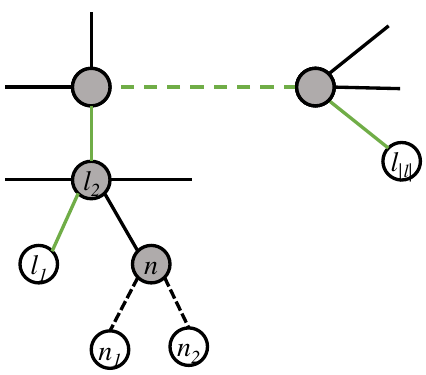}
\caption{If tunnel $l$ interferes with the most number of tunnels then all the nodes connected to $l_2$ must be leaf nodes. Otherwise, there exists a tunnel $(n_1, n, ..., l_{|l|})$ that interferes with more tunnels than tunnel $l$.}
\label{fig: tree proof}
\end{figure}

Consider a graph $G'$ formed by removing the neighbors of node $n$ other than $l_2$. In this graph, because of symmetry, a tunnel from $n$ to $l_{|l|}$ interferes with the same number of tunnels as $l$. Clearly, in graph $G$, the tunnel from $n_1$ to  $l_{|l|}$ interferes with more tunnels because in addition to all the tunnels that the path from $n$ to $l_{|l|}$ interferes with it also interferes with the tunnel from $n_1$ to $n_2$. This leads to a contradiction.

\section{Proof of Lemma 5} \label{app: lemma 5}
If $i$ and $j$ share the same parent, the path from $i$ to $j$ contains only two links $(i,p(i))$ and $(p(i), j)$. None of the tunnels that don't start in $i$ or end in $j$ use these links, hence no such tunnels $l$ intersect with the tunnel from $i$ to $j$.

If $i$ and $j$ do not share the same parent, $p(i)$ must be connected to a leaf node $i'$ in the subgraph obtained by removing the link $\{l_2, l_3\}$. Similarly $p(j)$ must be connected to a leaf node $j'$ in the subgraph obtained by removing the link $\{l_{|l|-2}, l_{|l|-3} \}$. The tunnel connecting the nodes $i'$ and $j'$ intersects with the tunnel connecting $i$ and $j$.

\section{Proof of Theorem 4} \label{app: theorem 4}
We will need one more lemma before proving the main theorem. This lemma simply uses Lemma \ref{lem: same parent} to show that Step 5 identifies the correct set of nodes.

\begin{lemma} \label{lem: group sibling}
Consider the set of nodes $X_{k^*}$ obtained in Step 5 of Algorithm \ref{alg: tree}. A leaf node $i$ is in $X_{k^*}$ if and only if $i$ shares the same parent as $k^*_1$.
\end{lemma}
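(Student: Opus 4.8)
The plan is to reduce the statement to a direct application of Lemma~\ref{lem: same parent}. By construction, $X_{k^*}$ is exactly the set of overlay nodes $i$ for which Procedure~\ref{alg: isSibling}, called on the pair $(i,k^*_1)$ on the current interference matrix, returns ``same parent.'' Thus the lemma is equivalent to the claim that Procedure~\ref{alg: isSibling} is a correct decision procedure for the sibling relation, and the whole argument consists in checking that the test it performs coincides with the interference condition characterized in Lemma~\ref{lem: same parent}. Here I would take as given (it is the standing invariant maintained by Algorithm~\ref{alg: tree} and verified in the proof of the main theorem) that the reduced graph implied by the current $\mathcal{F}$ is a minimal tree, so that Lemma~\ref{lem: same parent} applies.

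For the two directions I would argue as follows. Procedure~\ref{alg: isSibling} inspects the directed tunnel from $i$ to $k^*_1$ and, in Step~3, the reverse tunnel from $k^*_1$ to $i$; for the tunnel between endpoints $a$ and $b$ it rejects precisely when it finds an interfering tunnel $l$ with $l_1\neq a$ and $l_{|l|}\neq b$, which is exactly the negation of the condition in Lemma~\ref{lem: same parent}. Hence if $i$ and $k^*_1$ share a parent, Lemma~\ref{lem: same parent} guarantees that no such $l$ exists for either directed tunnel, the scan never triggers a rejection, and the procedure reaches its final step and places $i$ in $X_{k^*}$. Conversely, if $i$ and $k^*_1$ do not share a parent, the construction in the proof of Lemma~\ref{lem: same parent} exhibits leaves $i'\neq i$ and $j'\neq k^*_1$ whose tunnel interferes with the tunnel between $i$ and $k^*_1$; this interferer satisfies the rejection test, so the procedure returns ``not siblings'' and $i\notin X_{k^*}$. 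Ranging over all $i\in\mathcal{O}$ yields the desired iff.

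The bookkeeping is routine, and the one place to be careful is aligning the polarity of the test in Procedure~\ref{alg: isSibling} with Lemma~\ref{lem: same parent}: the procedure flags a pair only when an interferer avoids \emph{both} declared endpoints, so a single directed tunnel already decides siblinghood, and the second call in Step~3 is consistent and never rejects a genuine sibling pair. The degenerate case $i=k^*_1$ is handled by convention, with $k^*_1\in X_{k^*}$. The main (and only mild) obstacle is thus confirming the applicability of Lemma~\ref{lem: same parent}, i.e.\ that the current interference matrix is still that of a minimal tree; granting that invariant, the result is immediate.
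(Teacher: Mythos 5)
Your proposal is correct and follows essentially the same route as the paper: the paper's proof likewise reduces the claim to Lemma~\ref{lem: same parent}, observing that Procedure~\ref{alg: isSibling} tests exactly that interference condition and that Step~5 collects precisely the nodes that pass. Your added care about the polarity of the test and about the invariant that the reduced $\mathcal{F}$ still corresponds to a minimal tree is reasonable extra detail, but it does not change the argument.
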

\begin{proof}
By Lemma \ref{lem: same parent}, node $i$ and $k^*_1$ pass the test of Algorithm \ref{alg: isSibling} if and only if they share the same parent. Step 3 collects all the nodes that pass the test into $X_{k^*}$ and ignoring any node that doesn't. Hence, we obtain the required set  $X_{k^*}$.
\end{proof}

\subsection{Proof of Theorem 4}
By Lemmas \ref{lem: tunnel id} and \ref{lem: group sibling} we can see that Steps 5 and 6 identify a group of sibling nodes such that removing them makes their parent a leaf node. Steps 7 and 8 produce a the $\mathcal{F}$ matrix of a tree with the siblings of $k^*_1$ pruned. The new $\mathcal{F}$ matrix so formed corresponds to the such a tree because interference of tunnels starting or ending on the node $p(k^*_1)$ is exactly the same the tunnels starting or ending at node $k^*_1$ when its siblings are removed. Meanwhile, the pruned portion tree is recreated in $\hat{G}$ at every iteration of Step 5. Hence, when all the nodes in G are removed, the complete graph is created in $\hat{G}$.

\section{Proof of Lemma 6} \label{app: lemma 6}
Let $\mathcal{O} = \{1,2,...,n\}$ where $n = |\mathcal{O}|$. Assume that the correct ordering of the nodes in the ring is $p(1), p(2), ..., p(n)$. We want to show that tunnels from node 1 to 2  and 1 to n intersect with fewer tunnels than any other tunnel that start at node 1. 

We begin by showing that the tunnel $(1,...,2)$ intersects with fewer tunnels than tunnel $(1, ..., 3)$. These tunnels share the links $(1,p(1))$ and $(p(1), p(2))$. So any tunnel passing through these links intersect with both the tunnels. Also, because of symmetry, the number of tunnels intersecting with tunnel $(1,...,2)$ only at link $(p(2),2)$ is equal to the number of tunnels intersecting with tunnel $(1, ..., 3)$ only at link $(p(3),3)$. The tunnel $(2,...,4)$ does not intersect with tunnel $(1,...,2)$ however it intersects with tunnel $(1, ..., 3)$ only at link $(p(2), p(3))$. Hence tunnel  $(1, ..., 3)$ intersects with at least one more link than tunnel $(1,...,2)$. Clearly, any longer tunnel starting at node 1 must interfere with even more tunnels.
 
The ring has at least 5 nodes and the network is using the shortest path routing, so we can apply the same argument as above to show that to show that $(1,...,n)$ also intersects with the fewest number of tunnels among the tunnels starting at node 1 and passing through link $(p(1), p(n))$. Since all tunnels that start at node $1$ has to pass through either $p(n)$ or $p(1)$, these two tunnels must be the ones that intersect with the fewest other tunnels that start at node 1. 

Because of symmetry this property holds for tunnels starting at every node in the network. This completes the proof.
\end{document}